\documentclass[
  aps,
  prx,
  twocolumn,
  superscriptaddress,
  nofootinbib,
  longbibliography,
  floatfix
]{revtex4-2}

\usepackage{amsmath}
\usepackage{amssymb}
\usepackage{amsthm}
\usepackage{bm}
\usepackage{braket}
\usepackage{graphicx}
\usepackage{xcolor}
\usepackage{hyperref}
\usepackage{quantikz}
\usepackage[caption=false]{subfig} 

\theoremstyle{plain}
\newtheorem{theorem}{Theorem}
\newtheorem{proposition}[theorem]{Proposition}
\newtheorem{lemma}[theorem]{Lemma}
\newtheorem{corollary}[theorem]{Corollary}
\newtheorem{definition}[theorem]{Definition}

\theoremstyle{definition}
\newtheorem{remark}[theorem]{Remark}

\newcommand{\bx}{\bm{x}}
\newcommand{\bz}{\bm{z}}
\newcommand{\ba}{\bm{a}}

\newcommand{\bv}{\bm{v}}

\newcommand{\bphi}{\bm{\phi}}
\newcommand{\onenorm}[1]{\lVert #1 \rVert_{1}}
\newcommand{\twonorm}[1]{\lVert #1 \rVert_{2}}
\newcommand{\supnorm}[1]{\lVert #1 \rVert_{\infty}}
\newcommand{\Ex}{\mathbb{E}}

\newcommand{\inner}[2]{\langle #1, #2 \rangle}
\newcommand{\I}{\mathrm{i}}

\newcommand{\BigO}{\mathcal{O}}
\newcommand{\bbx}{\bm{\bar{x}}}

\DeclareMathOperator{\Span}{\mathrm{span}}

\DeclareMathOperator{\supp}{supp}
\DeclareMathOperator*{\Exp}{\mathbb{E}}

\begin{document}

\title{Custom Penalty QAOA via Walsh--Hadamard Transform}

\author{Xin Wei LEE}
\email{xwlee@smu.edu.sg}
\affiliation{School of Computing and Information Systems, Singapore Management University, 80 Stamford Road, Singapore 178902.}

\author{Siong Thye GOH}
\email{gohst2@a-star.edu.sg}
\affiliation{School of Computing and Information Systems, Singapore Management University, 80 Stamford Road, Singapore 178902.}
\affiliation{Institute of Advanced Intelligence and Computing (IAIC),  Agency for Science, Technology and Research (A*STAR), 1 Fusionopolis Way, Singapore 138632.}

\author{Hoong Chuin LAU}
\email{hclau@smu.edu.sg}
\affiliation{School of Computing and Information Systems, Singapore Management University, 80 Stamford Road, Singapore 178902.}
\affiliation{Institute of Advanced Intelligence and Computing (IAIC),  Agency for Science, Technology and Research (A*STAR), 1 Fusionopolis Way, Singapore 138632.}

\date{\today}

\begin{abstract}
Inequality constraints in combinatorial optimization are conventionally handled by introducing slack variables and quadratic penalties, which increase the qubit count and distort the optimization landscape. Slack-free formulations instead apply a nonlinear custom penalty function directly to the constraint, which are difficult to be implemented in quantum approximate optimization algorithm (QAOA). We show that the Walsh--Hadamard transform (WHT) removes this obstruction for any penalty function: a penalty applied to a linear constraint is exactly a weighted sum of Pauli-$Z$ strings. We propose this framework as the Walsh custom penalty QAOA (WCP-QAOA). Truncating the expansion at Walsh degree $k$ costs $\mathcal{O}(n^k)$ Pauli terms and leaves the qubit count untouched. The truncation is the optimal degree-$k$ approximation of the penalty in $L^2$, and we give a condition on the discarded Walsh coefficients under which the truncated Hamiltonian still ranks every feasible solution below every infeasible one. Since evaluating a Walsh coefficient from its definition requires all $2^n$ penalty values, we give an algorithm that returns any coefficient in sub-exponential time, so that the degree-$k$ Hamiltonian is obtained in time polynomial in $n$ for fixed $k$. We show that for the exponential penalty, there exists a closed product form of WHT, and hence a closed form expression to calculate the Walsh coefficients. We identify structural shortcuts that shorten the exact spectrum. We also tested WCP-QAOA in solving a 6-qubit knapsack problem.
\end{abstract}

\maketitle

\section{Introduction}

Constrained combinatorial optimization is one of the emerging applications for near-term quantum computing~\cite{ising-np,challenges-qo}.
The two popular variational algorithms, the quantum approximate optimization algorithm (QAOA)~\cite{farhi2014quantum} and the 
variational quantum eigensolver (VQE)~\cite{vqe-first,vqe-review}, both operate on an unconstrained problem Hamiltonian in their typical form,
so a constrained problem must first be converted into an unconstrained one. The standard conversion is the quadratic unconstrained binary optimization (QUBO) reformulation,
in which each violated constraint contributes an additive energy penalty. Equality constraints are squared directly. 
Inequality constraints, however, must first be converted into equalities by introducing slack variables, and each slack variable must itself be encoded in binary.

The slack variables are the source of two distinct problems. The first is a resource problem: each inequality constraint contributes $\lceil \log_2 s_{\max} \rceil$ additional qubits,
where $s_{\max}$ is the largest slack value the constraint can take, and qubits are scarce resources in near-term quantum hardware.
The second is a landscape problem. For a fixed feasible bit-string $\bx$, only one value of the slack register satisfies the derived equality,
every other value produces a nonzero penalty even though $\bx$ itself is feasible. The optimizer therefore has to locate a joint assignment of problem and slack variables,
and low QUBO energy no longer faithfully corresponds to a good solution of the original problem~\cite{in-constraint,assist-cvrp}.

This has motivated a family of slack-free encodings. In the context of quantum annealing, several works reformulated the partition function so that only constraint-satisfying
configurations contribute, with each inequality enforced by a Heaviside step function rather than by a large quadratic penalty~\cite{ohzeki-inequality,ohzeki-subgradient}.
On gate-based devices, Golden \emph{et al.}\ replaced the QAOA phase separator by a threshold function that flags solutions above a given objective value~\cite{th-gm-qaoa}.
Monta{\~n}ez-Barrera \emph{et al.}\ proposed unbalanced penalization, which uses a truncated Taylor expansion to approximate the exponential penalty of inequality 
constraints~\cite{ub-penalty,ub-dwave}.

Our own previous work pursued a different route. In Ref.~\cite{step-qce} we introduced the custom penalty VQE (CP-VQE),
in which the inequality constraint is penalized by an arbitrary indicator function $\xi$ applied directly to the constraint value,
with the exponential and the Heaviside step function as the two concrete choices. Because the constraint Hamiltonian is diagonal,
$\xi$ acts entrywise on its eigenvalues. The loss function never has to be expressed as a polynomial because we used a problem-agnostic ansatz.
The penalty expectation are evaluated classically with $\BigO(2^m)$ time, where $m$ is the number of variables in the constraint.
In Ref.~\cite{cvar-cp} we removed the exponential cost of evaluating the penalty expectation exactly by replacing it with a finite-sampling estimator,
and combined the custom penalty with conditional-value-at-risk (CVaR) training~\cite{cvar,opt-var},
which emphasizes the low-energy tail of the sampled distribution, on the multidimensional knapsack problem (MDKP).

Both of these works use VQE with a hardware-efficient ansatz (HEA)~\cite{hea}, and for a good reason: the nonlinear custom penalty cannot be directly placed inside a QAOA phase separator.
The QAOA trial state is built from alternating unitaries $e^{-\I\gamma H_C}$ and $e^{-\I\beta H_M}$, and $H_C$ appears in an exponent.
One can of course compute $h_j(\bx)$ into an ancilla register with a quantum adder~\cite{addition-qc,ripple-carry-circuit} and a comparator and phase-kick the result~\cite{if-qaoa},
but that reintroduces exactly the qubit overhead that the slack-free formulation was meant to remove, and it increases the gate count and depth substantially.
The approach in \cite{if-qaoa} requires a Quantum Phase Estimation (QPE) subroutine where the $m$ phase-addition steps and the QFT must be executed in sequence and then fully uncomputed,
resulting in a circuit depth that scales as $\mathcal{O}(m+M)$ with large constants  where $M$ is the number of ancillary qubits.
For a Knapsack instance of $m=20$ variables in \cite{if-qaoa}, this amounts to $94$ layers versus $28$ for the penalty-QUBO approach.
Detailed resource estimates are provided in Appendix C of~\cite{if-qaoa}.

Giving up QAOA is not attractive either. QAOA is the Trotterized form of adiabatic quantum computation, so its structure inherits the adiabatic guarantee in the large-depth limit,
and it possesses a body of structure that hardware-efficient ansatz do not: periodicity and symmetry of the expectation landscape~\cite{bilinear,Lotshaw_2021},
and strong regularities in the optimal parameters across angle index and circuit depth~\cite{leo2020,qaoa-mvc}.
These regularities are the basis of a large family of initialization heuristics: linear ramp schedules~\cite{towards-lr}, \textsc{INTERP} and \textsc{FOURIER}~\cite{leo2020},
parameter fixing and the bilinear strategy~\cite{pf-qaoa,bilinear}, and annealing-inspired (TQA) initialization~\cite{tqa2021}, 
which make deep QAOA circuits trainable in practice.
QAOA also naturally composes with feasibility-preserving mixers such as the $XY$ mixer~\cite{xy-mixer,constraint-preserving-qaoa} and the Grover mixer~\cite{grover-mixer},
which act on the mixing operator rather than on the cost Hamiltonian and are therefore complementary to a penalty-based treatment of inequality constraints.

This work therefore asks a concrete question: \emph{can we express the nonlinear penalty terms as a sum of Pauli observables?}
The tool is the Walsh--Hadamard transform (WHT). Any real function of $n$ bits has a unique multilinear expansion in the Walsh basis,
and Hadfield showed that this expansion is literally the Pauli-$Z$ representation of the corresponding diagonal Hamiltonian~\cite{boolean-ham}.
The same correspondence underlies the ancilla-free construction of diagonal unitaries from the Walsh series by Welch \textit{et al.}~\cite{welch-eff-qc}.
In the midst of working on this manuscript, we also discovered that Lee \emph{et al.} had used similar approach to formulate nonlinear penalties with WHT, projecting it 
to the space of quadratic loss function, i.e. a QUBO, to solve the problem on quantum annealers~\cite{loong-wht}.
On the other hand, this work focuses more on QAOA and the properties of a truncated WHT applied to linear constraints.
QAOA, of course, has the capability of handling higher-order terms, since it does not restrict the order of the loss function.
The truncation degree $k$ of the Walsh series is therefore a parameter to be decided by the user. 
A custom penalty $\xi(\ba^{\top}\bx - b)$ is such a function, so it has an exact Pauli-$Z$ form, and the phase separator can be implemented exactly,
with no ancilla and with no Trotter error, as all the phase-separation terms commute.
The difficulty is that the expansion has $2^n$ terms in general, and computing even one coefficient from the definition requires all $2^n$ function values.

Our contributions address both the size of the spectrum and the cost of obtaining it:

\begin{enumerate}
    \item We give the spectral form of the custom penalty phase separator: the penalty Hamiltonian is exactly a sum of commuting Pauli-$Z$ strings,
    so $e^{-i\gamma H_C}$ factorizes into $\mathrm{CNOT}$ ladders and $Z$ rotations with no ancilla and no Trotter error.
    Truncating at Walsh degree $k$ costs $O(n^k)$ Pauli terms with the qubit count fixed at $n$.
    We prove that the truncation preserves the feasible-infeasible ordering whenever the discarded Walsh weight is small compared with the penalty gap,
    which for the step penalty means below one half.
    \item We give an algorithm that computes any single Walsh coefficient of a penalty $\xi(\ba^{\top}\bx - b)$ in $O(n\onenorm{\ba})$ time and $O(\onenorm{\ba})$ memory,
    without enumerating the $2^n$ bit-strings and without sampling.
    For fixed $k$ the full degree-$k$ Hamiltonian follows in time polynomial in $n$, and the same identity yields exactly the retained spectral power.
    \item We show that the truncated Walsh spectrum is optimal in approximating the original penalty function.
    The Walsh truncation is never worse in $L^2$ than any polynomial surrogate of the same degree.
    We support this claim by proving that the least squares fitting of the Walsh series with the original penalty function indeed yield the Walsh coefficients. 
    \item For the exponential penalty, we obtain its Walsh coefficients in a closed-form, i.e. its Walsh coefficients can be calculated directly from the constraint coefficients,
    without iterating through the long summation of WHT.
    \item We identify structural shortcuts that shorten the exact spectrum before any truncation is applied: a constraint acting on $r$ of the $n$ variables has at most $2^{r}$ nonzero coefficients, a cardinality-type constraint has only $n+1$ distinct coefficient values, and a step-like penalty at a balanced threshold has no even-degree coefficients at all.
\end{enumerate}

The rest of the paper is organized as follows. Sec.~\ref{sec:background} fixes the notation and reviews constrained binary optimization,
QAOA, the custom penalty formulation, and the Walsh--Hadamard transform.
Sec.~\ref{sec:method} develops the Walsh custom penalty QAOA (WCP-QAOA) for a general penalty function: the spectral phase separator,
the truncated Walsh spectrum, and an algorithm to compute Walsh coefficients in sub-exponential time.
Sec.~\ref{sec:results} shows the results to analyze truncated Walsh series and the WCP-QAOA tested on a 6-qubit Knapsack instance.
Sec.~\ref{sec:discussions} discusses some interesting properties, e.g. the $L^2$ optimality of the truncated Walsh spectrum, the closed-form spectrum for the exponential penalty,
and some structural shortcuts that can reduce the computational cost for WHT.
We conclude the paper in Sec.~\ref{sec:conclusion}.

\section{Background}
\label{sec:background}
We first disambiguate the readers by introducing some of the common conventions.
Characters with boldfaces are vectors, which also include collections of numbers, e.g. $\bx$ is a bit-string which is a collection of binary numbers.
The upright $\I$ denotes the imaginary unit $\I = \sqrt{-1}$, while the italic $i$ usually denotes an index or a counter for a sum or a product.
Calligraphic fonts such as $\mathcal{H}$ do not have a special meaning
and are often used as a replacement when the original symbols are already used in other venues to retain their nuance.

\subsection{Constrained binary optimization on quantum computers}

Consider a constrained binary program
\begin{align}
\min_{\bx \in \{0,1\}^n} \quad & f(\bx) \label{eq:obj}\\
\text{s.t.} \quad & g_i(\bx) = 0, \qquad i = 1, \dots, L, \label{eq:eq}\\
& h_j(\bx) \le 0, \qquad j = 1, \dots, K, \label{eq:ineq}
\end{align}
with objective $f$, equality constraints $g$ and inequality constraints $h$. A bit-string satisfying all constraints is \emph{feasible}; otherwise it is \emph{infeasible}.
We assume throughout that the constraints are linear,
\begin{equation}
h_j(\bx) = \ba_j^{\top}\bx - b_j = \sum_{i=1}^{n} a_{ji}x_i - b_j ,
\label{eq:linear-constraint}
\end{equation}
with nonnegative integer coefficients $a_{ji} \ge 0$, which is the case for the capacity constraints of knapsack-type problems.

The constrained program is usually converted to the QUBO loss function to be minimized with quantum algorithms:
\begin{equation}
\begin{split}
\mathcal{L}_{\mathrm{slack}}(\bx) = {}& f(\bx) + \sum_{i=1}^{L}\mu_i\, g_i(\bx)^2\\
&+ \sum_{j=1}^{K}\lambda_j\Big[h_j(\bx) + \sum_{k=0}^{n_y-1} 2^{k}y_{k,j}\Big]^2 ,
\end{split}
\label{eq:qubo}
\end{equation}
with tunable penalty factors $\mu_i,\lambda_j > 0$. The second term raises the loss whenever an equality is violated.
The third term does the same for the inequalities, but only after each inequality has been converted into an equality,
\begin{equation}
h(\bx) \le 0 \;\Longrightarrow\; h(\bx) + y = 0,
\label{eq:slack}
\end{equation}
by the introduction of a \emph{slack variable} $y\geq 0$, which must itself be written in binary as $\sum_{k=0}^{n_y-1}2^k y_k$ with $n_y = \lceil \log_2 y_{\max}\rceil$ 
and $y_{\max}$ the largest value $y$ can take.
Because $x_i^2 = x_i$ for binary variables, $\mathcal{L}_{\mathrm{slack}}$ contains only a constant, linear terms and quadratic terms, and is therefore a genuine QUBO.

The slack variables are the antagonists of this story. They add $n_y$ variables (or qubits) per inequality, and they degrade the landscape:
for a fixed feasible $\bx$ only one value of $s$ satisfies Eq.~\eqref{eq:slack}, so the great majority of slack assignments penalize a perfectly feasible $\bx$.
Our interest is in the \emph{slack-free} formulations that avoid them entirely.

Moving towards the quantum setting, the binary variables are mapped to spin variables $z_i \in \{+1,-1\}$ by the linear map
\begin{equation}
    x_i = \frac{1 - z_i}{2},
\label{eq:spinmap}
\end{equation}
and the spins are replaced by Pauli-$Z$ operators, $z_i \to Z_i$, with
\begin{align}
    Z & = \begin{pmatrix}
        1 & 0 \\
        0 & -1
    \end{pmatrix} \\
    Z_i & = I\otimes \cdots \otimes \underbrace{Z}_{i\text{-th position}}\otimes \cdots \otimes I,
\end{align}
where $\otimes$ is the Kronecker product. For a quadratic loss, this produces an Ising Hamiltonian which defines the problem:
\begin{equation}
    H_C = \sum_{i<j} q_{ij} Z_i Z_j + \sum_{i} l_i Z_i + \text{const},
    \label{eq:ising}
\end{equation}
which is diagonal in the computational basis. $q_{ij}$ and $l_i$ are coefficients. The eigenvalue attached to $\ket{\bx}$ equal to the loss of that bit-string,
\begin{equation}
    H_C\ket{\bx} = \mathcal{L}(\bx)\ket{\bx}.
    \label{eq:diag}
\end{equation}
The task of a variational algorithm is then to prepare a trial state $\ket{\psi(\bm{\theta})}$ minimizing $\braket{\psi(\bm{\theta})|H_C|\psi(\bm{\theta})}$.

\subsection{QAOA and VQE}

QAOA~\cite{farhi2014quantum} prepares the trial state
\begin{equation}
\ket{\psi(\bm{\gamma},\bm{\beta})} = e^{-\I\beta_p H_M}e^{-\I\gamma_p H_C}\cdots e^{-\I\beta_1 H_M}e^{-\I\gamma_1 H_C}\ket{+}^{\otimes n},
\label{eq:qaoa}
\end{equation}
where $H_C$ is the problem Hamiltonian, $H_M = \sum_{i=1}^{n} X_i$ is the mixer Hamiltonian, $p$ is the QAOA depth (number of QAOA layers),
and $(\bm{\gamma},\bm{\beta}) \in \mathbb{R}^{2p}$ are the variational parameters. The unitary $e^{-i\gamma H_C}$ is the \emph{phase separator}:
it is diagonal and imprints a phase proportional to the loss of each bit-string. Eq.~\eqref{eq:qaoa} is the Trotterized and discretized form of
adiabatic quantum computation~\cite{farhi:qaa}, which is the origin of both the theoretical guarantee at large $p$ and the empirical regularity of the optimal parameters
exploited by initialization heuristics~\cite{leo2020,tqa2021}.
The quantum alternating operator ansatz~\cite{new-qaoa} generalizes Eq.~\eqref{eq:qaoa} by allowing mixers other than the transverse field;
the $XY$ mixer~\cite{xy-mixer,constraint-preserving-qaoa} preserves the subspace of constant Hamming weight,
and the Grover mixer~\cite{grover-mixer} preserves the support of a chosen initial superposition, so that infeasible states are never populated.

VQE~\cite{vqe-first,vqe-review} minimizes the same expectation but places no restriction on the form of the circuit; hardware-efficient ansatz~\cite{hea} are common.
The crucial structural difference is that VQE never places $H_C$ in an exponent, whereas QAOA does.

\subsection{Custom penalty formulation}

Instead of squaring the inequalities and introducing slack, the custom penalty formulation~\cite{step-qce,cvar-cp} applies an indicator function 
$\xi:\mathbb{R}\to\mathbb{R}$ directly to the constraint value,
\begin{equation}
\mathcal{L}_{\mathrm{custom}}(\bx) = f(\bx) + \sum_{j=1}^{K}\lambda_j\, \xi\big[h_j(\bx)\big].
\label{eq:custom}
\end{equation}
A good $\xi$ is small for $h_j(\bx)\le 0$ and large for $h_j(\bx) > 0$. Two canonical choices are the Heaviside step function
\begin{equation}
    \Theta(y) = 
    \begin{cases} 
        1 & y > 0,\\
        0 & y \le 0,
    \end{cases}
\label{eq:heaviside}
\end{equation}
and the exponential penalty $e^{\eta y}$ with $\eta>0$, which is the one used implicitly by unbalanced penalization. The corresponding Hamiltonian is
\begin{equation}
H_C = H_f + \sum_{j=1}^{K}\lambda_j\, \xi(H_j),
\label{eq:custom-ham}
\end{equation}
where $H_f$ and $H_j$ are obtained from $f$ and $h_j$ by the map~\eqref{eq:spinmap}. Since $H_j$ is diagonal, $\xi$ acts entry-wise on its eigenvalues,
and no polynomial representation of $\xi$ is needed to evaluate Eq.~\eqref{eq:custom-ham}.

If the penalty factors are chosen large enough, Eq.~\eqref{eq:custom} separates feasible from infeasible solutions exactly,
in the sense that every feasible bit-string has strictly lower loss than every infeasible one, so the global minimum of $\mathcal{L}_{\mathrm{custom}}$ is the constrained optimum.
For the step penalty, it suffices that $\lambda$ exceed the spread of the objective, which can be bounded by the standard upper-bound rule~\cite{penalty-weights}.

The challenge in using Eq.~\eqref{eq:custom-ham} in QAOA is that $H_C$ now appears in the exponent of Eq.~\eqref{eq:qaoa}.
A good indicator function is usually nonlinear and, in the cases of interest, non-polynomial, and there is no elementary gate decomposition of $e^{-\I\gamma\,\xi(H_j)}$.
Arithmetic circuits with ancillary registers and comparators can implement the nonlinear indicator function~\cite{qc-arithmetic,addition-qc,ripple-carry-circuit},
but they reinstate the qubit overhead that the formulation was designed to avoid.

\subsection{The Walsh--Hadamard transform}
\label{sec:wht}

The Walsh--Hadamard transform uniquely expresses a function with Boolean inputs $\phi:\{0,1\}^n\to\mathbb{R}$ in the Walsh basis~\cite{odonnell-boolean-func}:
\begin{equation}
\phi(\bx) = \sum_{S\subseteq[n]} \hat{\phi}(S)\,\chi_S(\bx),
\label{eq:wht}
\end{equation}
where $[n]=\{1,\dots,n\}$ and the Walsh characters are the parity monomials
\begin{equation}
\chi_S(\bx) = \prod_{i\in S} z_i = (-1)^{\sum_{i\in S} x_i},
\label{eq:character}
\end{equation}
with $z_i = (-1)^{x_i}$ and by definition $\chi_\varnothing \equiv 1$.
The Walsh coefficients are computed as 
\begin{equation}
    \hat{\phi}(S) = \frac{1}{2^n}\sum_{\bx\in\{0,1\}^n}\phi(\bx)\chi_S(\bx)
    \label{eq:coeff}
\end{equation}
The \emph{degree} of $\phi$ is the largest $|S|$ with $\hat{\phi}(S)\neq 0$, and we call the terms with $|S|=k$ the \emph{$k$-local} terms.

Recall that with a linear mapping Eq.~\eqref{eq:spinmap}, $\chi_S(\bx)$ can be converted to $\chi_S(\bz)$ in the spin domain.
Interestingly, the parities of all the spin-strings are encoded in the Hadamard matrix
\begin{equation}
    \mathcal{H} =
    \begin{pmatrix} 
        1 & 1\\
        1 & -1
    \end{pmatrix}.
    \label{eq:hadamard}
\end{equation}
We use the notation $\mathcal{H}^n := \mathcal{H}^{\otimes n}$.
The rows (or columns) of $\mathcal{H}^n$ are (up to normalization) the characters $\chi_S$ evaluated on all $2^n$ bit-strings:
\begin{equation}
    \mathcal{H}^n =
    \begin{pmatrix}
        \line(1,0){30} & \bm\chi_\varnothing & \line(1,0){30} \\ 
        \line(1,0){30} & \bm\chi_{\{1\}} & \line(1,0){30} \\
        & \vdots & \\
        \line(1,0){30} & \bm\chi_{\{1,2\}} & \line(1,0){30} \\
        & \vdots & \\
        \line(1,0){30} & \bm\chi_{[n]} & \line(1,0){30} 
    \end{pmatrix}
\end{equation}
Therefore, from Eq.~\eqref{eq:wht}, the Walsh coefficient vector can be obtained by solving the system of linear equations
\begin{align}
    \mathcal{H}^n \hat{\bm{\phi}} & = \bm{\phi} \label{eq:walsh-lse} \\
    \hat{\bm{\phi}} & = \frac{1}{2^n}\mathcal{H}^n\bm{\phi} \label{eq:walsh-lse-inv},
\end{align}
since $(\mathcal{H}^n)^{-1} = 2^{-n}\mathcal{H}^n$. Note that the size of $\mathcal{H}^n$ is exponentially large, and solving Eq.~\eqref{eq:walsh-lse-inv} requires the knowledge
of $\phi(\bx)$ for all the $2^n$ bit-strings.
Alternatively, the coefficients can be estimated from Eq.~\eqref{eq:coeff}
\begin{equation}
    \hat{\phi}(S) = \Exp_{\bx \sim\{0,1\}^n}\big[\phi(\bx)\chi_S(\bx)\big].
    \label{eq:phi-est}
\end{equation}
This avoids enumerating all $2^n$ values of $\phi$ but requires a number of samples that grows with the accuracy demanded of small coefficients.

The Walsh basis is a natural choice because it is indeed the Pauli-$Z$ basis for diagonal operators~\cite{boolean-ham,welch-eff-qc}.
Under $z_i\to Z_i$ the character $\chi_S$ becomes the Pauli string $Z_S = \prod_{i\in S}Z_i$, and Eq.~\eqref{eq:wht} becomes an operator identity.

\begin{lemma}[Spectral form of a diagonal Hamiltonian]
\label{lem:dictionary}
Let $H_\phi$ be the diagonal operator defined by $H_\phi\ket{\bx} = \phi(\bx)\ket{\bx}$. Then
\begin{equation}
H_\phi = \sum_{S\subseteq[n]}\hat{\phi}(S)\, Z_S ,
\label{eq:ham-wht}
\end{equation}
and all terms commute, so for every $\gamma$
\begin{equation}
e^{-\I\gamma H_\phi} = \prod_{S\subseteq[n]} e^{-\I\gamma\hat{\phi}(S)Z_S}
\label{eq:exact-factorization}
\end{equation}
exactly, with no Trotter error and in any ordering. $Z_S = \prod_{i\in S}Z_i$ is the Pauli string acting on the qubits with indices in $S$.
\end{lemma}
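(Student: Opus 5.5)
The plan is to verify the operator identity \eqref{eq:ham-wht} on the computational basis. Both sides are linear operators on $(\mathbb{C}^2)^{\otimes n}$, and the basis $\{\ket{\bx}\}$ spans this space, so it suffices to show that they act identically on every $\ket{\bx}$. First I would check that each Pauli string is diagonal with the character as its eigenvalue. Since $Z\ket{x_i} = (-1)^{x_i}\ket{x_i}$ and $Z_S$ acts as $Z$ on the qubits in $S$ and as identity elsewhere, the tensor structure gives $Z_S\ket{\bx} = \prod_{i\in S}(-1)^{x_i}\ket{\bx} = \chi_S(\bx)\ket{\bx}$ by \eqref{eq:character}; for $S=\varnothing$ this is $Z_\varnothing = I$ and $\chi_\varnothing\equiv 1$. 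Summing, the right-hand side of \eqref{eq:ham-wht} maps $\ket{\bx}$ to $\sum_S \hat\phi(S)\chi_S(\bx)\ket{\bx}$, which equals $\phi(\bx)\ket{\bx} = H_\phi\ket{\bx}$ by the Walsh expansion \eqref{eq:wht}.

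To keep the argument self-contained, I would also justify \eqref{eq:wht} with the coefficients \eqref{eq:coeff}. This amounts to the orthogonality relation $\sum_{\bx}\chi_S(\bx)\chi_T(\bx) = 2^n\delta_{ST}$. It follows because $\chi_S\chi_T = \chi_{S\triangle T}$, and for nonempty $U$ the sum $\sum_{\bx}\chi_U(\bx)$ factorizes as $\prod_{i\in U}(1-1)\cdot 2^{n-|U|} = 0$. Equivalently, $\mathcal{H}^n\mathcal{H}^n = 2^n I$, which is the inversion step \eqref{eq:walsh-lse-inv} already stated. Hence the $2^n$ characters form a basis of functions on $\{0,1\}^n$, and substituting \eqref{eq:coeff} into the right-hand side of \eqref{eq:wht} recovers $\phi$.

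For \eqref{eq:exact-factorization}, all $Z_S$ are diagonal in the same basis and therefore pairwise commute. For commuting operators $A$ and $B$ one has $e^{A+B}=e^Ae^B$, for example by the binomial expansion of the power series, so induction over the finite family $\{-\I\gamma\hat\phi(S)Z_S\}_S$ gives the product in any order. A more direct check is also available: both sides are diagonal, and on $\ket{\bx}$ both equal $\prod_S e^{-\I\gamma\hat\phi(S)\chi_S(\bx)} = e^{-\I\gamma\phi(\bx)}$.

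None of these steps is a real obstacle. The only points needing care are the sign convention, where $z_i = (-1)^{x_i}$ must match $Z\ket{1}=-\ket{1}$ and is consistent with \eqref{eq:spinmap}, and the orthogonality relation, which is the single computation the argument rests on.
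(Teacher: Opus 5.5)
Your proposal is correct and follows the paper's proof: both sides are compared entrywise on the computational basis using $Z_S\ket{\bx}=\chi_S(\bx)\ket{\bx}$ together with the Walsh expansion, and the factorization then follows from the mutual commutativity of the diagonal $Z_S$. Your extra steps, namely the character orthogonality argument for the expansion and the identity $e^{A+B}=e^{A}e^{B}$ for commuting operators, only fill in facts the paper takes as known.
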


\begin{proof}
Both sides of Eq.~\eqref{eq:ham-wht} are diagonal, and their $\bx$-th entries agree by Eq.~\eqref{eq:wht} together with $Z_S\ket{\bx} = \chi_S(\bx)\ket{\bx}$. All $Z_S$ are diagonal and therefore mutually commuting, which gives Eq.~\eqref{eq:exact-factorization}.
\end{proof}

Lemma~\ref{lem:dictionary} is the whole reason the program of this paper is possible: it converts the problem of implementing a non-polynomial penalty in the QAOA exponent
into the problem of finding a short Walsh series for it.
We call this framework the Walsh custom penalty QAOA (WCP-QAOA). 

\section{Custom Penalty QAOA with Walsh--Hadamard Spectrum}
\label{sec:method}
Throughout this section we consider a single linear inequality constraint,
\begin{equation}
h(\bx) = \ba^{\top}\bx - b \le 0,
\label{eq:one-constraint}
\end{equation}
with $\ba \in \mathbb{Z}_{\geq 0}^n$ and $0 \le b < \onenorm{\ba}$. The associated penalty is
\begin{equation}
\phi(\bx) = \xi\big(\ba^{\top}\bx - b\big).
\label{eq:phi}
\end{equation}
The case of multiple constraints is treated in Sec.~\ref{sec:multi}. We write $s = \ba^{\top}\bx \in \{0,1,\dots,\onenorm{\ba}\}$ for the constraint value,
so that $\phi$ depends on $\bx$ only through $s$, and $\onenorm{\ba} = \sum_i a_i$ is the $L^1$ norm of $\ba$ (the largest value $s$ can attain).

\subsection{Spectral phase separator for QAOA}

Applying Lemma~\ref{lem:dictionary} to Eq.~\eqref{eq:phi} gives the exact Walsh-expanded penalty Hamiltonian
\begin{equation}
    H_\phi = \sum_{S\subseteq[n]}\hat{\phi}(S)Z_S,
\label{eq:Hphi}
\end{equation}
where
\begin{equation}
    \hat{\phi}(S) = \frac{1}{2^n} \sum_{\bx\in\{0,1\}^n} \xi(\ba^{\top}\bx - b)\chi_S(\bx).
\end{equation}
The complete problem Hamiltonian for Eq.~\eqref{eq:custom} is $H_C = H_f + \lambda H_\phi$. The phase separator factorizes exactly as in Eq.~\eqref{eq:exact-factorization},
and each factor (corresponding to a term in the Walsh series) is a standard $\mathrm{CNOT}$-ladder-and-$R_z$ block,
\begin{multline}
    e^{-\I\gamma\hat\phi(S)Z_S}
    = \left(\prod_{\text{ladder}} \mathrm{CNOT}\right) \\
    \cdot R_{z}^{(m)} \big(2\gamma\hat\phi(S)\big)
    \left(\prod_{\text{ladder}} \mathrm{CNOT}\right),
    \label{eq:zsblock}
\end{multline}
where $R_z^{(m)}(\theta) = e^{-\I\theta Z_m/2}$ is the $Z$-rotation acting on the pivot qubit $m = \max(S)$ with angle $\theta$. 
This circuit requires $2(|S|-1)$ $\mathrm{CNOT}$s and one single-qubit rotation. There are no ancillas, no comparators and no Trotter error.
Fig.~\ref{fig:z-ladder} shows the implementation of such an operator with the example of $S=\{1,2,4\}$. 

\begin{figure}
    \centering
    \begin{quantikz}
    	\lstick{$q_1$} & \ctrl{1} & 		 &                                           &          & \ctrl{1} & \\
    	\lstick{$q_2$} & \targ{}  & \ctrl{2} &                                           & \ctrl{2} & \targ{}  & \\
    	\lstick{$q_3$} &          &          &                                           &          &          & \\
    	\lstick{$q_4$} &          & \targ{}	 & \gate{R_z \big(2\gamma\hat{\phi}(S)\big)} & \targ{}  &          &
    \end{quantikz}
    \caption{Quantum circuit to implement the QAOA phase separator $e^{-\I\gamma\hat{\phi}(S)Z_S}$, with $S = \{1,2,4\}$. }
    \label{fig:z-ladder}
\end{figure}

Now we look at the circuit to implement the entire Walsh series in Eq.~\eqref{eq:exact-factorization}.
Since all Pauli-$Z$ strings commute, the phase separator for the Walsh Hamiltonian $H_\phi$ factorizes exactly without Trotter error.
The identity term $S=\varnothing$ contributes only a global phase and can be dropped.
Since the $R_z$ rotation gate is often implemented as a virtual gate (a phase shift pulse) in quantum hardware~\cite{z-gate}, we only consider the number of CNOTs required.
Concatenating all $2^n-1$ non-identity terms in arbitrary order requires $\sum_{k=2}^{n}2(k-1)\binom{n}{k} = n2^n-2^{n+1}+2$ CNOTs, i.e., $\BigO(n2^n)$ gates in total.

However, Bullock and Markov pointed out that most of these CNOTs are redundant~\cite{bullock-markov}.
For two consecutive terms $(S,S')$ sharing a pivot, the intervening CNOT ladders cancel except for CNOTs controlled on the
symmetric difference $S\triangle S'$, the bitwise XOR of their binary labels.
Welch \emph{et al.} exploited this by ordering the non-identity subsets along a binary reflected Gray code (known as the sequency order in~\cite{welch-eff-qc}),
so consecutive subsets differ in exactly one element~\cite{welch-eff-qc}.
Then, the sequence is partitioned by pivot, $G_i=\{S:\max(S)=i\}$, so that each block places its rotations on, and
targets its CNOTs at, the single qubit $i$. For $n=3$ the Gray-code sequence
$\{1\},\{1,2\},\{2\},\{2,3\},\{1,2,3\},\{1,3\},\{3\}$ splits into
$G_1=\{\{1\}\}$, $G_2=\{\{1,2\},\{2\}\}$ and
$G_3=\{\{2,3\},\{1,2,3\},\{1,3\},\{3\}\}$. Every block ends on the single-qubit
term $\{i\}$, so the parity accumulated on qubit $i$ is undone within the block
and no trailing CNOTs are needed. Block $G_i$ contains $2^{i-1}$ terms, each
preceded by exactly one CNOT for $i\ge2$, while $G_1$ is a bare rotation.
Summing over blocks, we get the number of CNOTs as $\sum_{i=2}^{n}2^{i-1}=2^n-2$,
which reproduces the gate count in~\cite{bullock-markov}, drastically reduced from $\BigO(n2^n)$.
At $n=12$ this is a reduction from $40962$ to $4094$ CNOTs.
This is still an enormous number of gate counts, and it scales exponentially with the number of qubits.
Therefore, it calls for a truncation method, where we can control the number of terms in the Walsh series and hence the number of gates under a polynomial scaling.


\subsection{Truncated Walsh series}
The natural approximation is to keep only the low-degree part of the spectrum. We define the degree-$k$ truncation
\begin{equation}
    \phi_k(\bx) = \sum_{|S|\le k}\hat{\phi}(S)\chi_S(\bx)
    \label{eq:trunc-phi}
\end{equation}
and the corresponding truncated Walsh Hamiltonian
\begin{equation}
    H_\phi^{(k)} = \sum_{|S|\le k}\hat{\phi}(S)Z_S.
    \label{eq:trunc-ham}
\end{equation}
We use $\bphi$ to denote the vector form of $\phi$, and $T_k\bphi$ to denote the vector form of $\phi_k$. 
Recall that the Walsh characters $\chi_S$ are the bases for $\phi$. 
As $\chi_S$ are orthonormal, $T_k\bphi$ is precisely the orthogonal projection of $\phi$ onto the space of polynomials of degree at most $k$, i.e.,
it is the \emph{best} degree-$k$ approximation of $\bphi$ in $L^2$ under the uniform measure. We show this in Sec.~\ref{sec:walsh-opt}.
Two remarks are in order. First, the truncation is applied to the Hamiltonian, not to the unitary: Eq.~\eqref{eq:exact-factorization} still holds exactly for $H_\phi^{(k)}$,
so the only error incurred is the classical function-approximation error $\bphi - T_k\bphi$. Second, $H^{(k)}_\phi$ contains
\begin{equation}
    M_k = \sum_{i=1}^{k}\binom{n}{i} = \BigO(n^k)
    \label{eq:Mk}
\end{equation}
Pauli terms, which is polynomial in $n$ for fixed $k$. In particular $k=2$ gives $n(n+1)/2$ terms, exactly the size of a dense QUBO phase separator.

In terms of the circuit cost, no closed-form optimal CNOT count is available for a truncated Walsh series.
Welch \emph{et al.} suggest that CNOT commutation identities reduce the Gray-code construction to $O(N')$ gates for $N'$ nonzero coefficients in the Walsh series,
but support this only numerically for up to five qubits~\cite{welch-eff-qc}.
Zylberman \emph{et al.} note that Gray-code ordering need not be efficient for sparse sets~\cite{zylberman-cnot}.
More fundamentally, synthesizing a CNOT-minimal circuit realizing a prescribed set of parities (a \emph{parity network}) is NP-complete already when all CNOTs are
restricted to a common target~\cite{amy-cnot}, which is the evidence that no general closed form should be expected. We therefore bound the cost instead.
Each CNOT changes the parity carried by a single wire and can thus create at most one new parity, while for generic angles every $Z_S$ with $\hat\phi(S)\neq0$ must be
carried by some wire when its rotation is applied~\cite{amy-cnot}.
Therefore, for a $k$-local truncation, any ancilla-free implementation satisfies
\begin{equation}
    \sum_{i=2}^{k}\binom{n}{i}\le N_{\mathrm{CNOT}}\le
    \sum_{i=2}^{k}2(i-1)\binom{n}{i},
    \label{eq:cnot-sandwich}
\end{equation}
the upper bound being the term-by-term construction where no CNOT cancellation occur at all.
For $k=2$ this pins the cost within a factor of two, and in practice the GraySynth heuristic of
Ref.~\cite{amy-cnot} attains $\binom{n}{2}+n-1$ CNOTs, including the uncomputation required before the mixer.

How much is lost by truncating is a question about spectral concentration, and for a general penalty it is answered by two elementary identities:
\begin{align}
    \supnorm{\bphi - T_k\bphi} & \le \sum_{|S|>k}\big|\hat{\phi}(S)\big| ,
    \label{eq:err-linf}\\
    \twonorm{\bphi - T_k\bphi}^{2} & = \sum_{|S|>k}\hat{\phi}(S)^{2} ,
    \label{eq:err-l2}
\end{align}
the first by the triangle inequality with $|\chi_S| = 1$; the second by Parseval's identity
\begin{equation}
    \twonorm{\bphi}^2 = \sum_{S\subseteq[n]} \hat{\phi}(S)^2,
    \label{eq:parseval}
\end{equation}
which states that the squared 2-norm for any function $\phi$ is equal to the sum of its squared coefficients.
The design question is therefore how quickly the Walsh mass of a given penalty decays with degree. We define
\begin{equation}
    \Pi_k = \frac{\sum_{|S|\le k}\hat{\phi}(S)^{2}}{\sum_{S\subseteq[n]}\hat{\phi}(S)^{2}} \in [0,1]
    \label{eq:retained-power}
\end{equation}
as the fraction of spectral power retained at degree $k$.
This is a convenient figure of merit for comparing different candidate penalty profiles at a fixed term budget.
It is a property of the pair (penalty, constraint) rather than of the penalty alone, since the coefficients $\ba$ determine which bit-strings share a constraint value,
and hence how the penalty values are distributed over the Boolean hypercube.

Truncation perturbs the penalty, and the first thing to ask of the perturbed Hamiltonian is whether its ground state still solves the constrained problem.
Let the spread of the objective be
\begin{equation}
    \Delta_f = \max_{\bx}f(\bx) - \min_{\bx}f(\bx).
\end{equation}
Let the \emph{penalty gap} be 
\begin{equation}
    \delta_\xi = \min_{\bx:\,h(\bx)>0}\phi(\bx) - \max_{\bx:\,h(\bx)\le 0}\phi(\bx),
\label{eq:gap}
\end{equation}
which is the gap between the least-penalized infeasible and the most-penalized feasible bit-string. Let
\begin{equation}
\varepsilon_k = \sum_{|S|>k} \big|\hat{\phi}(S) \big|
\label{eq:eps}
\end{equation}
be the sum of the discarded Walsh coefficients, the right-hand side of Eq.~\eqref{eq:err-linf}.

\begin{proposition}[Feasibility separation under truncation]
\label{prop:separation}
If
\begin{equation}
\lambda\big(\delta_\xi - 2\varepsilon_k\big) \;>\; \Delta_f ,
\label{eq:sep-general}
\end{equation}
then the truncated loss $\mathcal{L}_k(\bx) = f(\bx) + \lambda\,T_k\phi(\bx)$ satisfies $\mathcal{L}_k(\bx) < \mathcal{L}_k(\bx')$ for every feasible $\bx$ and every infeasible $\bx'$.
\end{proposition}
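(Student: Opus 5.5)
The plan is a direct sandwich argument combining the sup-norm truncation bound Eq.~\eqref{eq:err-linf} with the definition of the penalty gap Eq.~\eqref{eq:gap}. First, by Eq.~\eqref{eq:err-linf} and the definition Eq.~\eqref{eq:eps}, every bit-string satisfies
\begin{equation*}
\phi(\bx) - \varepsilon_k \le T_k\phi(\bx) \le \phi(\bx) + \varepsilon_k .
\end{equation*}
This uses only $|\chi_S(\bx)|=1$ and the triangle inequality applied to the discarded tail $\phi - T_k\phi = \sum_{|S|>k}\hat\phi(S)\chi_S$.

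Next, fix any feasible $\bx$ and any infeasible $\bx'$. Write $\phi^{\mathrm{feas}}_{\max} = \max_{h(\bx)\le 0}\phi(\bx)$ and $\phi^{\mathrm{inf}}_{\min} = \min_{h(\bx')>0}\phi(\bx')$, so that $\delta_\xi = \phi^{\mathrm{inf}}_{\min}-\phi^{\mathrm{feas}}_{\max}$. The upper bound at $\bx$ and the lower bound at $\bx'$ give
\begin{equation*}
T_k\phi(\bx') - T_k\phi(\bx) \ge \big(\phi(\bx') - \varepsilon_k\big) - \big(\phi(\bx) + \varepsilon_k\big) \ge \delta_\xi - 2\varepsilon_k .
\end{equation*}
For the objective we have trivially $f(\bx) - f(\bx') \le \Delta_f$.

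Combining these, and using $\lambda>0$,
\begin{equation*}
\mathcal{L}_k(\bx') - \mathcal{L}_k(\bx) = f(\bx')-f(\bx) + \lambda\big(T_k\phi(\bx')-T_k\phi(\bx)\big) \ge -\Delta_f + \lambda(\delta_\xi - 2\varepsilon_k) > 0
\end{equation*}
by hypothesis Eq.~\eqref{eq:sep-general}, which is the claim.

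There is no real obstacle here. The only points to watch are that $\lambda$ is positive, so multiplying the inequality preserves its direction, and that the hypothesis implicitly forces $\delta_\xi>2\varepsilon_k$ because $\Delta_f\ge0$. I would also remark afterwards that for the step penalty with the normalization $\delta_\xi=1$, the condition specializes to $\varepsilon_k<1/2$ in the limit of large $\lambda$, matching the claim in the introduction.
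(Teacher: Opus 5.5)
Your proof is correct and is essentially the paper's own argument: the two-sided sup-norm bound from Eq.~\eqref{eq:err-linf} is applied from above at the feasible $\bx$ and from below at the infeasible $\bx'$, then combined with the definition of $\delta_\xi$ and the trivial bound $f(\bx)-f(\bx')\le\Delta_f$. Your explicit use of $\lambda>0$ to keep the inequality direction is a step the paper's proof leaves implicit; it relies on the standing assumption $\lambda_j>0$.
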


\begin{proof}
By Eq.~\eqref{eq:err-linf}, $T_k\phi(\bx)\le\phi(\bx)+\varepsilon_k$ and $T_k\phi(\bx')\ge\phi(\bx')-\varepsilon_k$. Hence $\mathcal{L}_k(\bx') - \mathcal{L}_k(\bx) \ge -\Delta_f + \lambda\big[\phi(\bx') - \phi(\bx) - 2\varepsilon_k\big] \ge -\Delta_f + \lambda(\delta_\xi - 2\varepsilon_k) > 0$.
\end{proof}

For the step penalty $\delta_\Theta = 1$ for every constraint, so Eq.~\eqref{eq:sep-general} asks only that the discarded weight be below one half and that
$\lambda$ exceed $\Delta_f/(1-2\varepsilon_k)$. Two caveats apply.
First, $\varepsilon_k$ is an $\ell_1$ tail and cannot in general be evaluated without the high-degree coefficients it sums.
Second, Eq.~\eqref{eq:sep-general} is sufficient and pessimistic,
because it demands that the truncated penalty order every feasible bit-string below every infeasible one.
In a variational setting this is more than is needed: the sampled distribution concentrates on low-energy states,
and under CVaR training~\cite{cvar,cvar-cp} only the lowest tail of the distribution enters the objective at all.
A tail-based criterion, whether the best feasible solution still appears among the lowest-energy states of the truncated Hamiltonian, is the operationally relevant one,
and we expect it to be met at considerably smaller $k$ than Eq.~\eqref{eq:sep-general} requires.


\subsection{Computing the coefficients in sub-exponential time}
\label{sec:coefficients}

Computing the Walsh coefficients is challenging, as Eq.~\eqref{eq:coeff} is unusable as written, since it sums over $2^n$ bit-strings. 
The sampling estimator of Eq.~\eqref{eq:phi-est} needs many samples to resolve small coefficients.
Solving the linear system of equations in Eq.~\eqref{eq:walsh-lse} requires the knowledge of the vector $\bphi$, which $\phi(\bx)$ for all $x$'s.
Here, we introduce a method to compute a single Walsh coefficient in sub-exponential time.

\begin{theorem}[Computation of the Walsh coefficient in sub-exponential time]
\label{thm:sub-exp-coeff}
The Walsh coefficient can be computed with
\begin{equation}
    \hat{\phi}(S) = \sum_{s=0}^{\onenorm{\ba}}\xi(s-b) \omega_S(s),
    \label{eq:signedfreq}
\end{equation}
where $\omega_S$ is the signed frequency
\begin{equation}
    \omega_S(s) = \frac{1}{2^n}\sum_{\bx:\ba^{\top}\bx = s}\chi_S(\bx)
    \label{eq:signedfreq-def}
\end{equation}
can be computed by 
\begin{equation}
    \bm\omega_S = \bm{g}_1 * \bm{g}_2 * \cdots * \bm{g}_n ,
    \label{eq:convolution}
\end{equation}
where $*$ denotes discrete convolution, and $\bm{g}_i$ is a vector with
\begin{equation}
    \bm{g}_i[s] =
    \begin{cases}
        \phantom{-}\tfrac{1}{2} & s = 0\\[2pt]
        -\tfrac{1}{2} & s = a_i \ \text{and}\ i\in S\\[2pt]
        \phantom{-}\tfrac{1}{2} & s = a_i \ \text{and}\ i\notin S\\[2pt]
        \phantom{-}0 & \text{otherwise}.
    \end{cases}
    \label{eq:gvec}
\end{equation}
Consequently $\hat\phi(S)$ can be computed in $\BigO(n\onenorm{\ba})$ time and $\BigO(\onenorm{\ba})$ memory.
\end{theorem}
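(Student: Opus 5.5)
The plan has three steps. First group the defining sum \eqref{eq:coeff} by constraint value. Then identify the resulting signed frequency as the coefficient sequence of a product of univariate polynomials. Finally, count the cost of forming that product one factor at a time.

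\textbf{Grouping by constraint value.} Since $\phi(\bx)=\xi(\ba^{\top}\bx-b)$ depends on $\bx$ only through $s=\ba^{\top}\bx\in\{0,\dots,\onenorm{\ba}\}$, we partition $\{0,1\}^n$ into the level sets $\{\bx:\ba^{\top}\bx=s\}$. Pulling $\xi(s-b)$ out of each block gives Eq.~\eqref{eq:signedfreq} with $\omega_S$ as in Eq.~\eqref{eq:signedfreq-def}. This step is immediate.

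\textbf{Generating-function identity.} Introduce the generating polynomial $W_S(t)=\sum_s \omega_S(s)t^s$. Substituting the definition gives
\begin{equation*}
W_S(t)=\frac{1}{2^n}\sum_{\bx}\chi_S(\bx)\,t^{\ba^{\top}\bx}.
\end{equation*}
Because $\chi_S(\bx)=\prod_{i\in S}(-1)^{x_i}$ and $t^{\ba^{\top}\bx}=\prod_i t^{a_ix_i}$, the summand factorizes over coordinates. The sum over $\bx\in\{0,1\}^n$ is therefore a product of sums over each $x_i\in\{0,1\}$:
\begin{equation*}
W_S(t)=\prod_{i=1}^n \tfrac12\big(1+\sigma_i t^{a_i}\big),\qquad \sigma_i=\begin{cases}-1 & i\in S,\\ +1 & i\notin S.\end{cases}
\end{equation*}
The $i$-th factor is exactly the generating polynomial of $\bm g_i$ in Eq.~\eqref{eq:gvec}. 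Multiplying polynomials corresponds to convolving their coefficient vectors, so Eq.~\eqref{eq:convolution} follows.

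One degenerate case needs care. If $a_i=0$, the two cases $s=0$ and $s=a_i$ of Eq.~\eqref{eq:gvec} coincide. We read $\bm g_i$ as the coefficient vector of $\tfrac12(1+\sigma_i t^{0})$, so $\bm g_i[0]$ equals $1$ or $0$. The statement should be interpreted additively in this way.

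\textbf{Complexity.} We compute the convolution sequentially. Write $\bm\omega^{(j)}=\bm g_1*\cdots*\bm g_j$, which is supported on $\{0,\dots,\sum_{i\le j}a_i\}\subseteq\{0,\dots,\onenorm{\ba}\}$. Since $\bm g_{j+1}$ has only two nonzero entries, one update is the in-place rule
\begin{equation*}
\bm\omega^{(j+1)}[s]=\tfrac12\big(\bm\omega^{(j)}[s]+\sigma_{j+1}\,\bm\omega^{(j)}[s-a_{j+1}]\big).
\end{equation*}
Sweeping $s$ downward lets the update overwrite a single array. Each step then costs $\BigO(\onenorm{\ba})$ time, and $n$ steps cost $\BigO(n\onenorm{\ba})$ in total. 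The final pass through Eq.~\eqref{eq:signedfreq} adds another $\BigO(\onenorm{\ba})$, and memory stays at one array of length $\onenorm{\ba}+1$.

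\textbf{Where care is needed.} I expect no real obstacle. The factorization in the second step is the crux. The only points to check are that the convolution uses two-sparse kernels rather than generic dense convolutions, which would cost more, and the $a_i=0$ case. The time bound is pseudo-polynomial in the weights, so "sub-exponential" refers to bounded $\onenorm{\ba}$; this should be noted rather than proved.
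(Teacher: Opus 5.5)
Your proposal is correct and follows the paper's proof essentially step for step. The paper likewise obtains $\bm\omega_S$ as the coefficient vector of $\prod_{i}\tfrac12\bigl(1+(-1)^{[i\in S]}\zeta^{a_i}\bigr)$ via the distributive law, treats $a_i=0$ by adding the coinciding entries, and counts $n$ two-sparse convolutions on vectors of length $\onenorm{\ba}+1$. Your downward in-place sweep, which stores one array instead of the paper's two, is only a minor refinement within the same $\BigO(\onenorm{\ba})$ memory bound.
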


Note that this computation is exact, not approximation, achieved by grouping and rearranging the sum.
Because $\phi$ depends on $\bx$ only through $s = \ba^{\top}\bx$, we can group the sum in Eq.~\eqref{eq:coeff} by the value of $s$.
We call $\omega_S(s)$ the \emph{signed frequency}: it counts how many bit-strings realize the constraint value $s$, each weighted by its parity on $S$
($+1$ for an even number of ones inside $S$, $-1$ for an odd number), and normalized by $2^n$. 
The vector $\omega_S \in \mathbb{R}^{\onenorm{\ba}+1}$ is a property of the constraint alone; the penalty enters only through the dot product in Eq.~\eqref{eq:signedfreq}.
This separation is useful in its own right, since a single computation of $\omega_S$ serves every candidate penalty function.

The proof is given in Appendix~\ref{app:convolution}. The complexity claim follows because each $\bm{g}_i$ has at most two nonzero entries, 
so convolving it with a vector of length $\onenorm{\ba}+1$ costs $\BigO(\onenorm{\ba})$, and there are $n$ such convolutions.
The final dot product in Eq.~\eqref{eq:signedfreq} costs $\BigO(\onenorm{\ba})$.

Combining Theorem~\ref{thm:sub-exp-coeff} with Eq.~\eqref{eq:Mk}, the classical preprocessing for the degree-$k$ Hamiltonian costs
\begin{equation}
    \BigO\big(n\onenorm{\ba}M_k\big) = \BigO\big(n^{k+1}\onenorm{\ba}\big),
    \label{eq:preprocessing}
\end{equation}
which is polynomial in $n$ for fixed $k$ and pseudo-polynomial in the constraint data.

\begin{remark}[The retained power is computable]
\label{rem:computable}
By Parseval, the denominator of Eq.~\eqref{eq:retained-power} is $\sum_S\hat\phi(S)^2 = \Ex_{\bx}[\phi(\bx)^2] = \sum_{s}\xi(s-b)^2\,\omega_\varnothing(s)$,
and $\omega_\varnothing$ is the case $S=\varnothing$ of Theorem~\ref{thm:sub-exp-coeff}.
The retained power $\Pi_k$ and the $L^2$ truncation error of Eq.~\eqref{eq:err-l2} can therefore be computed exactly at the cost of Eq.~\eqref{eq:preprocessing},
without evaluating a single coefficient of degree above $k$. The $\ell_1$ tail $\varepsilon_k$ of Eq.~\eqref{eq:eps} enjoys no such identity.
The Cauchy--Schwarz inequality gives the computable bound $\varepsilon_k \le \big(\sum_{j>k}\binom{n}{j}\big)^{1/2}\lVert\phi - T_k\phi\rVert_2$, which is useful only for small $n$.
\end{remark}

\subsection{Multiple constraints}
\label{sec:multi}

With $K$ inequality constraints the loss is Eq.~\eqref{eq:custom}, and each penalty is expanded independently,
\begin{equation}
H_C^{(k)} = H_f + \sum_{j=1}^{K}\lambda_j \sum_{|S|\le k}\hat{\phi}_j(S)\,Z_S ,
\label{eq:multi}
\end{equation}
where $\hat\phi_j$ is the spectrum of $\xi(\ba_j^{\top}\bx - b_j)$. Coefficients on the same $S$ from different constraints are summed, so the gate count does not grow with $K$; only the classical preprocessing does, linearly. The truncation error bounds extend by the triangle inequality: the supremum-norm error of the full penalty term is at most $\sum_{j}\lambda_j\varepsilon_k^{(j)}$, where $\varepsilon_k^{(j)}$ is the sum of the discarded coefficients, Eq.~\eqref{eq:eps}, of constraint $j$.

\subsection{Summary of the algorithm}

Collecting the preceding subsections, WCP-QAOA with a truncated Walsh spectrum proceeds as follows.

\begin{enumerate}
\item \textbf{Scale.} Rescale each constraint so that $\ba_j,b_j$ are integer with $\onenorm{\ba_j}$ as small as possible, since the cost of Step~3 is linear in $\onenorm{\ba_j}$.
\item \textbf{Choose the penalty function and the penalty factor.} Select a $\xi$ with a positive penalty gap $\delta_\xi$, Eq.~\eqref{eq:gap}, and a $\lambda_j$ large enough for 
Proposition~\ref{prop:separation} to be met, using a bound on the objective spread $\Delta_f$ such as the upper-bound rule~\cite{penalty-weights}.
\item \textbf{Compute the spectrum.} Check first whether the structural shortcuts of Sec.~\ref{sec:structure} apply, since they remove whole families of coefficients before any is computed.
Then, for increasing $k$, compute $\{\hat{\phi}_j(S)\}_{|S| = k}$ by Theorem~\ref{thm:sub-exp-coeff} at the cost of Eq.~\eqref{eq:preprocessing},
and stop when the retained spectral power $\Pi_k$ of Eq.~\eqref{eq:retained-power}, which is exactly computable (Remark~\ref{rem:computable}),
reaches the target or the gate budget of Eq.~\eqref{eq:cnot-sandwich} is exhausted. Individual terms whose coefficients fall below a magnitude threshold may additionally be discarded.
\item \textbf{Assemble the Hamiltonian.} Form $H_C^{(k)}$ as in Eq.~\eqref{eq:multi}, summing the coefficients that share the same $S$ across constraints.
\item \textbf{Run QAOA.} Build the phase separator from Eq.~\eqref{eq:zsblock} and optimize $(\bm{\gamma},\bm{\beta})$.
\end{enumerate}

Steps 1--4 are entirely classical and are performed once, before the circuit is executed.

\section{Numerical Results}
\label{sec:results}

We test our framework with a 6-qubit Knapsack problem. The 0-1 Knapsack problem is formulated as
\begin{align}
    \max_{\bx\in\{0,1\}^n}\ & \bv^\top\bx \label{eq:kp-obj} \\
    \text{s.t.}\ & \ba^\top \bx \leq b \label{eq:constraint}.
\end{align}
$\bx$ are the decision variables; $\bv$ contains the values (profit) of the items; $\ba$ contains the weights of the items; $b$ is the capacity of the Knapsack.
The problem has an integer inequality constraint, which introduces slack variables when converted to QUBO, and can be avoided using the CP formulation.
The 6-qubit (6-variable) instance has the following values:
\begin{align}
    \bm{v} & = [5, 10, 20, 30, 35, 40]^\top \\
    \bm{a} & = [3, 5, 6, 7, 9, 10]^\top \\
    b & = 14.
\end{align}
First, we investigate the effect of the truncated Walsh spectrum in approximating the original penalty function.
For all possible $2^6=64$ bit-strings, we plot the constraint values $\ba^\top\bx-b$ against its values in the penalty functions,
for the original function $\phi(\bx) = \xi(\ba^\top\bx -b)$ and the truncated Walsh spectrum $T_k\phi$.
Fig.~\ref{fig:walsh-approx} shows the results for different penalty functions: Heaviside step, hyperbolic tangent, exponential, and rectified linear unit (ReLU):
\begin{equation}
    \mathrm{ReLU}(y) = \max(0,y).
    \label{eq:relu}
\end{equation}
From the figure, the most intuitive observation is that the truncation fits better as degree $k$ increases.
The truncation degree $k$ represents the truncated Walsh series $T_k\phi$ has terms up to $k$-local, as defined in \eqref{eq:trunc-phi}, with the largest degree as $k=6$
representing the complete Walsh series.
On the other hand, for different penalty functions, same degree of truncation produces different level of approximation. 
For exponential and ReLU, $k=4$ is sufficient to approximate the full series well, whereas the Heaviside step function and the hyperbolic tangent appears to diverge more from it.
At $k=6$, all the values match the original penalty function as it is a full Walsh series.
Fig.~\ref{fig:pi-k} quantifies the level of divergence from the original penalty function. 
We plot $\tilde\Pi_k$ (stated in Eq.~\eqref{eq:retained-power}, but excluding the constant term $\hat\phi(\varnothing)$) 
\begin{equation}
    \tilde\Pi_k = \frac{\sum_{|S|\le k}\hat{\phi}(S)^{2} - \hat\phi(\varnothing)^2}{\sum_{S\subseteq[n]}\hat{\phi}(S)^{2} - \hat\phi(\varnothing)^2}
    \label{eq:retained-power-no-constant}
\end{equation}
against $k$. The constant term (offset) is usually large compared to other coefficients in the Walsh series,
and is removed to show clearer difference in the values for different functions.
For the exponential and ReLU penalties, $\tilde\Pi_k$ exceeds 0.99 at $k=3$, whereas for the step and the hyperbolic tangent penalties, $\Pi_k$ did not achieve 1.0 until $k=6$.
This shows that some functions require less Walsh terms to approximate, while others require more. 

\begin{figure*}
    \centering
    \subfloat[]{%
        \includegraphics[width=0.32\textwidth]{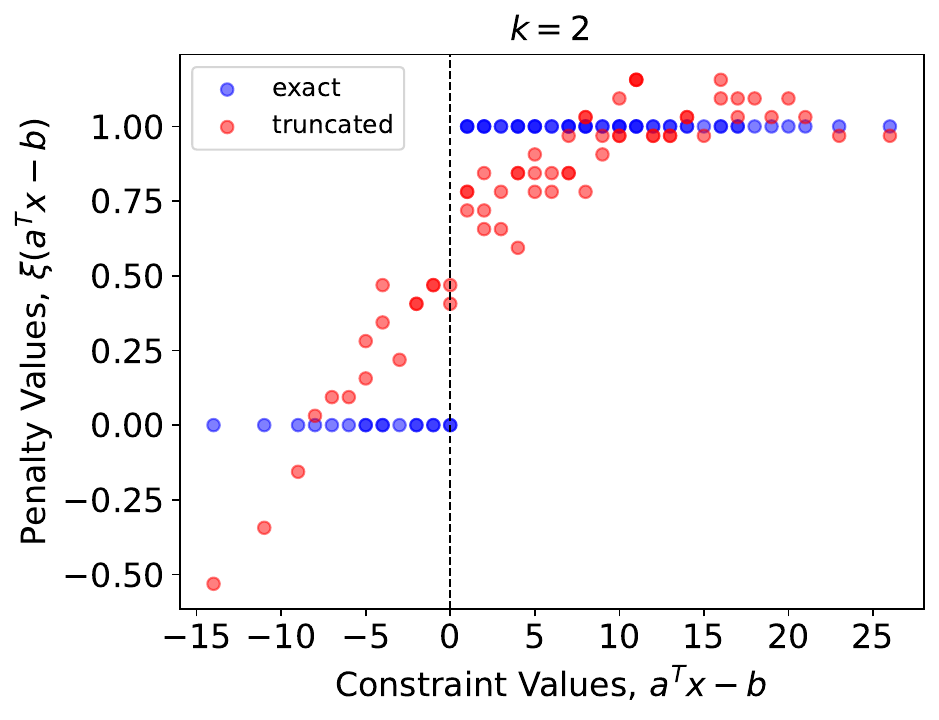}%
    }%
    \hfill 
    \subfloat[]{%
        \includegraphics[width=0.32\textwidth]{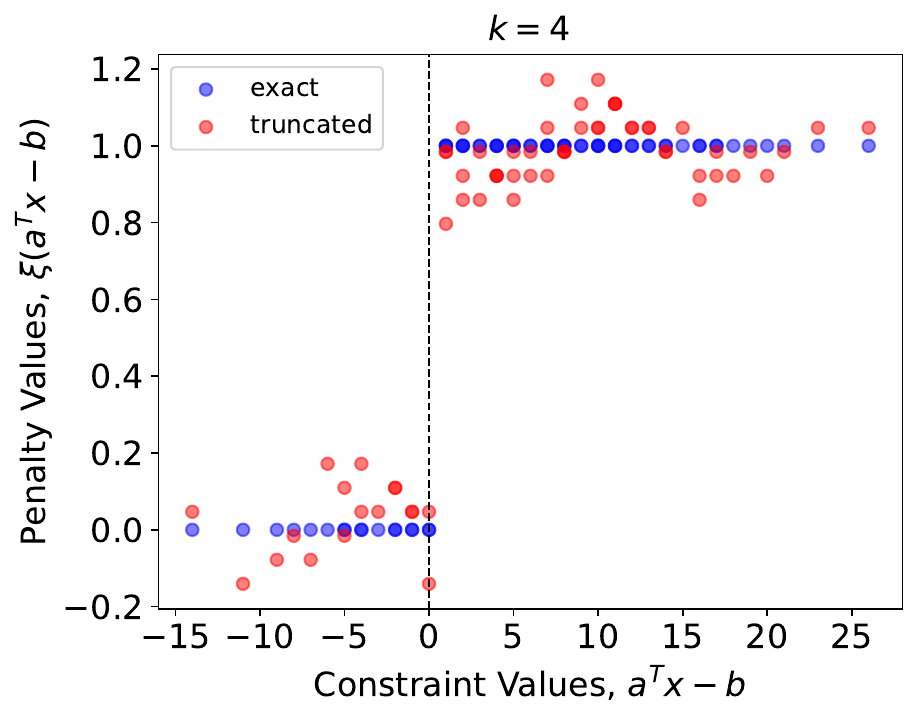}%
    }%
    \hfill
    \subfloat[]{%
        \includegraphics[width=0.32\textwidth]{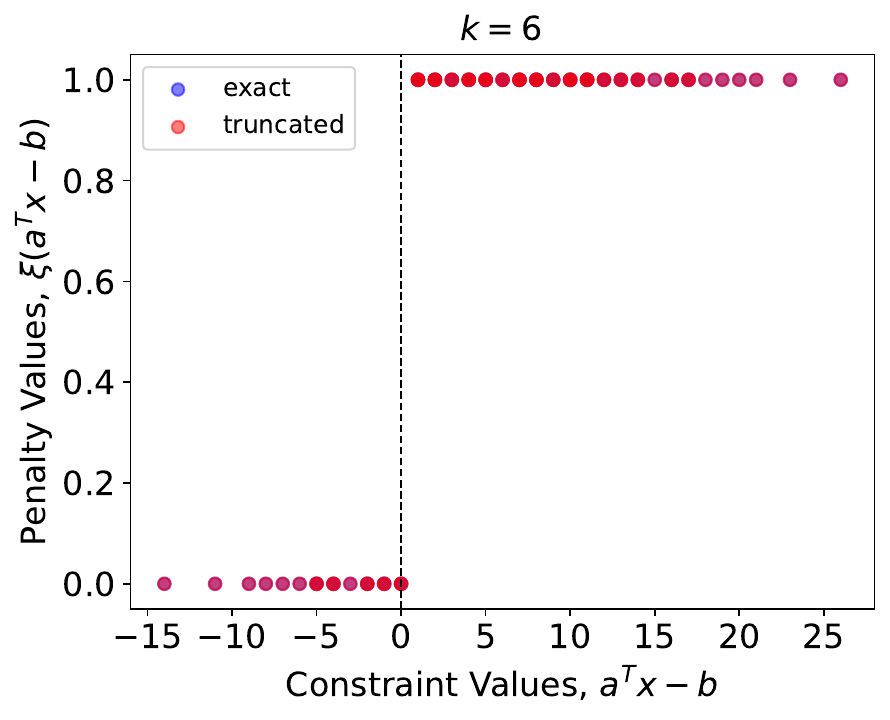}%
    }%
    \hfill
    \subfloat[]{%
        \includegraphics[width=0.32\textwidth]{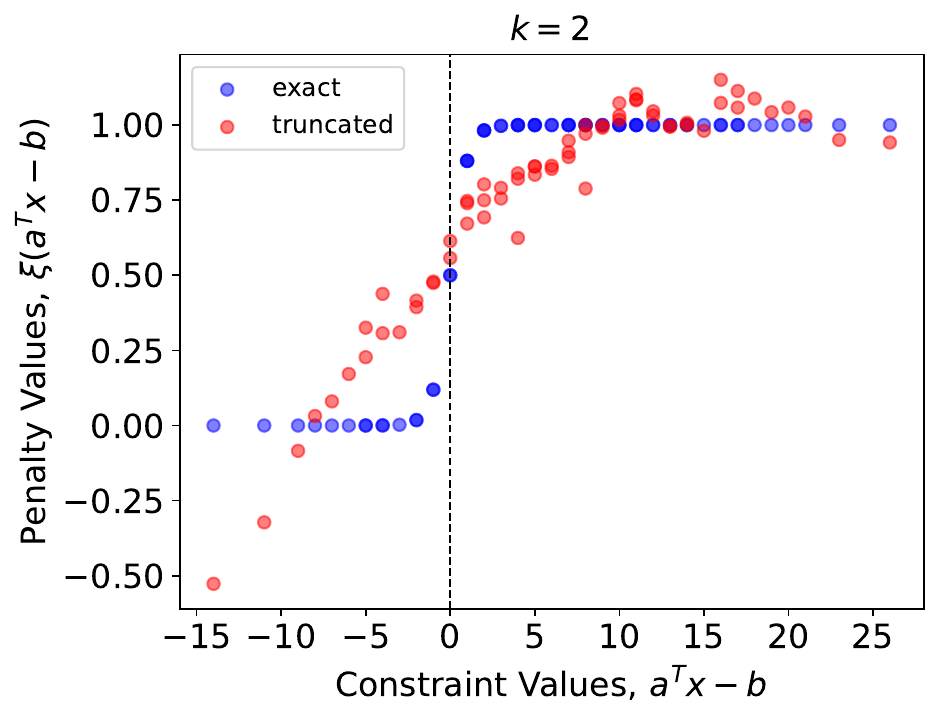}%
    }%
    \hfill
    \subfloat[]{%
        \includegraphics[width=0.32\textwidth]{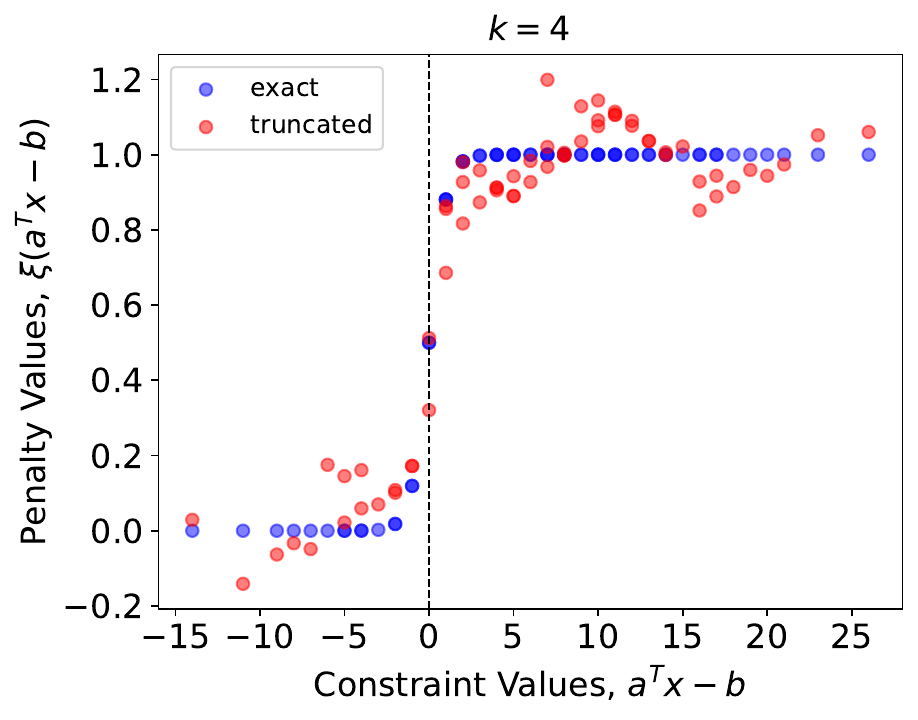}%
    }%
    \hfill
    \subfloat[]{%
        \includegraphics[width=0.32\textwidth]{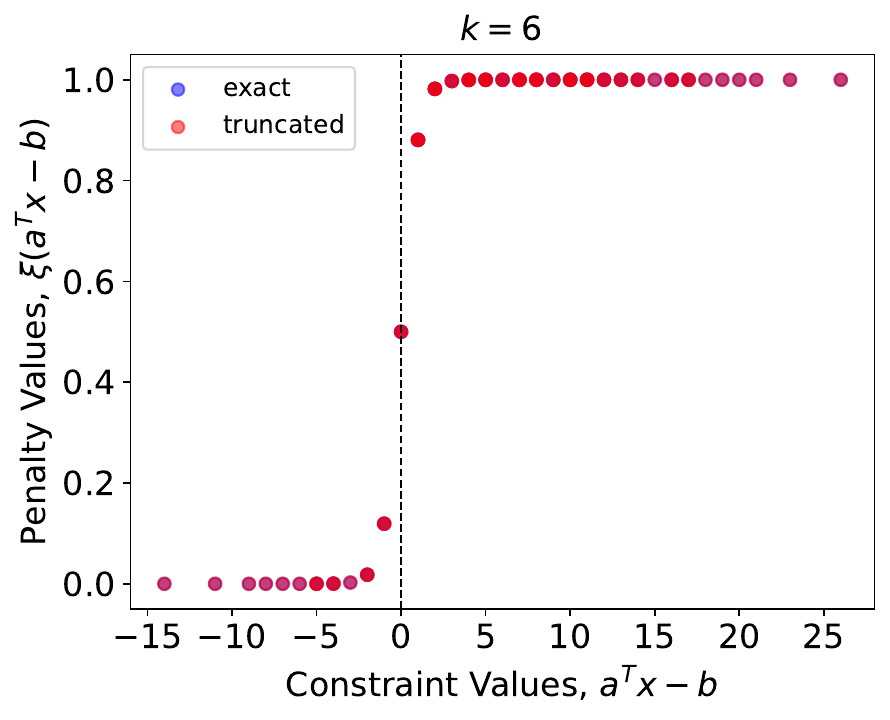}%
    }%
    \hfill
    \subfloat[]{%
        \includegraphics[width=0.32\textwidth]{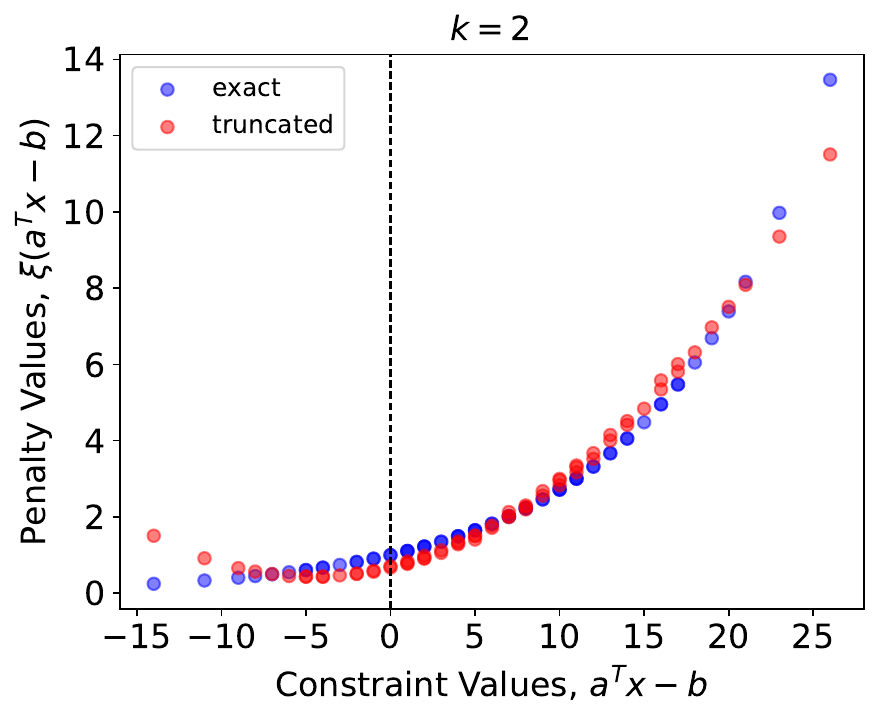}%
    }%
    \hfill
    \subfloat[]{%
        \includegraphics[width=0.32\textwidth]{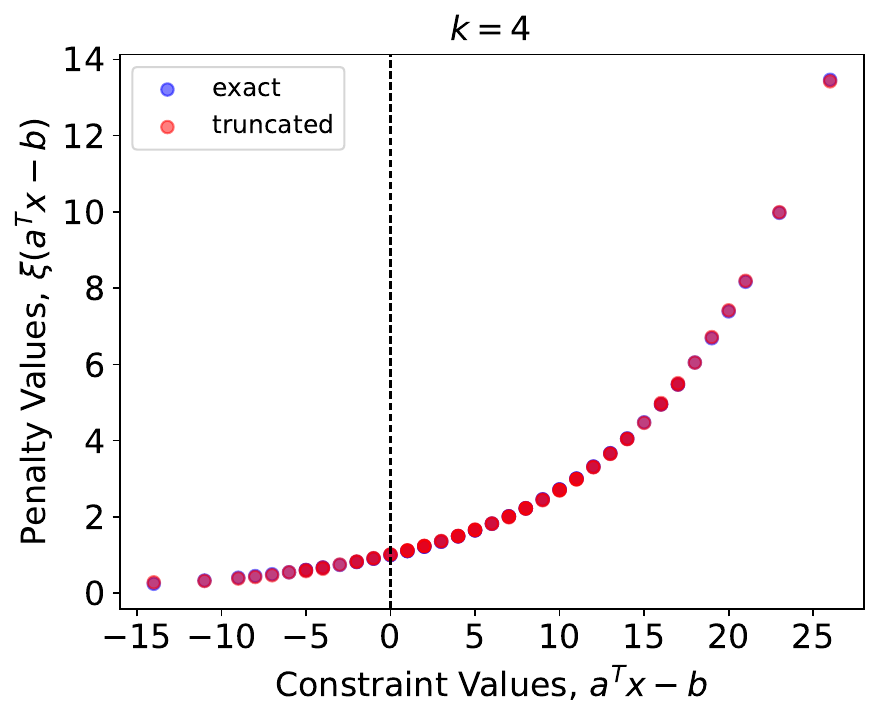}%
    }%
    \hfill
    \subfloat[]{%
        \includegraphics[width=0.32\textwidth]{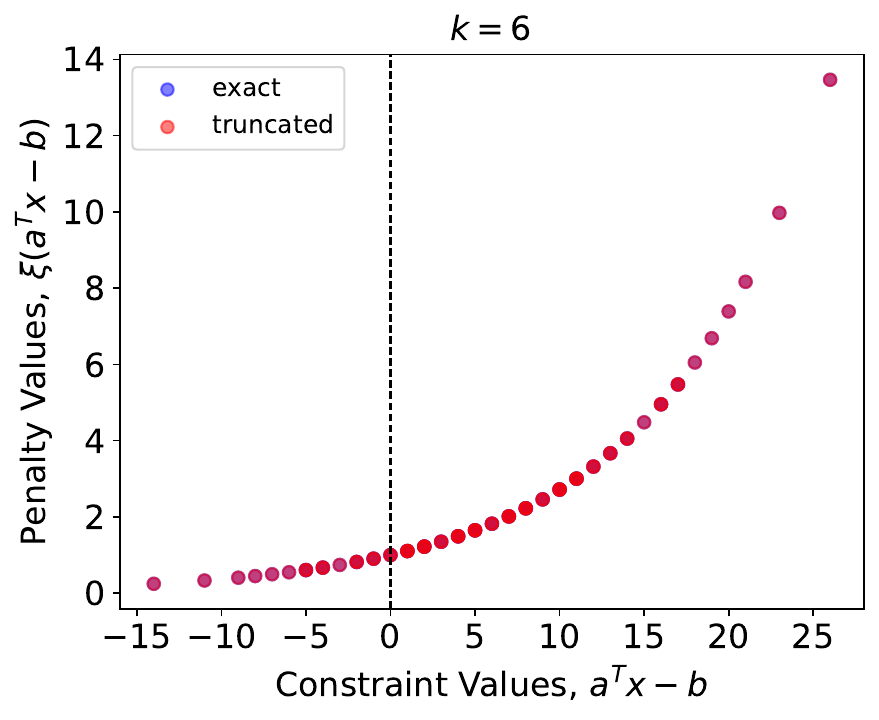}%
    }%
    \hfill
    \subfloat[]{%
        \includegraphics[width=0.32\textwidth]{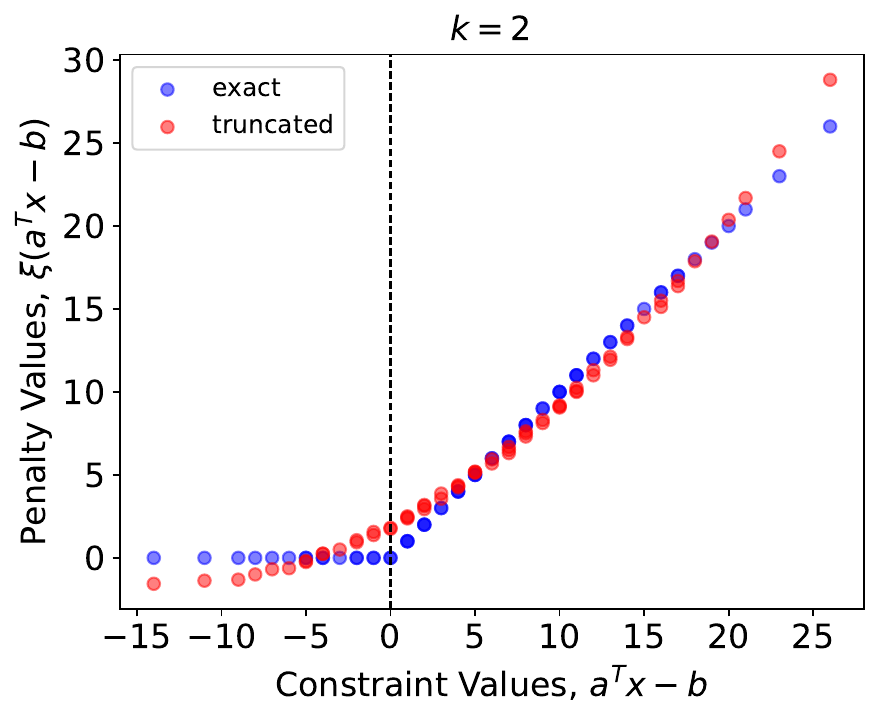}%
    }%
    \hfill
    \subfloat[]{%
        \includegraphics[width=0.32\textwidth]{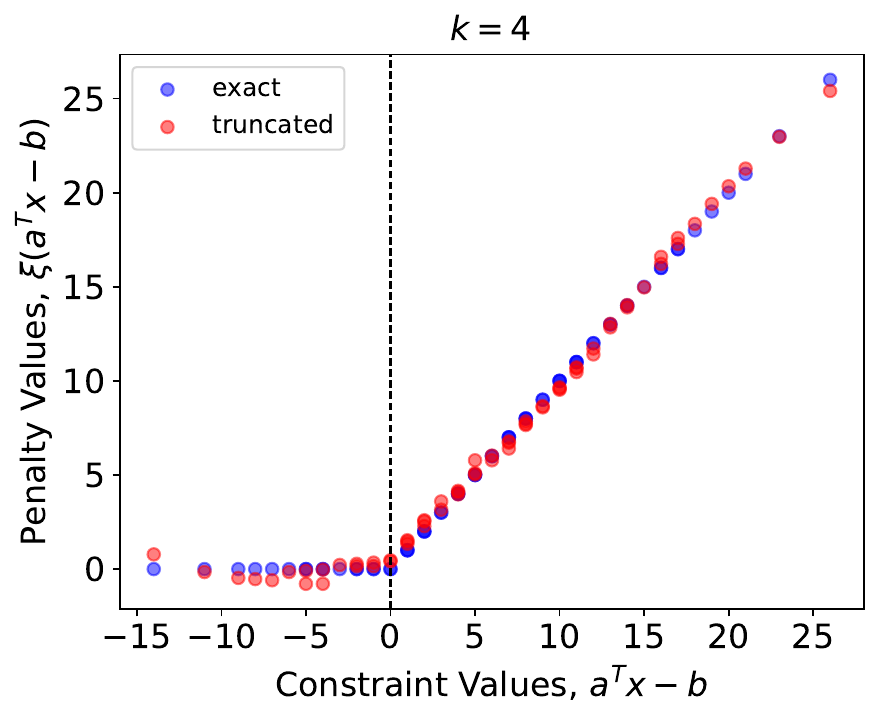}%
    }%
    \hfill
    \subfloat[]{%
        \includegraphics[width=0.32\textwidth]{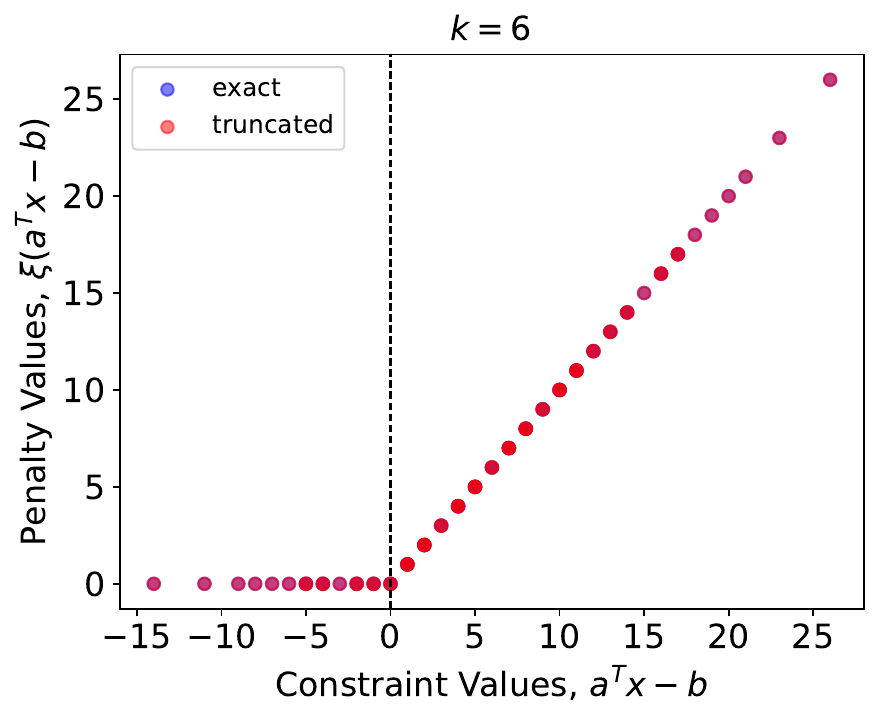}%
    }%
    \caption{Comparison of the penalty values given by the truncated Walsh spectrum $T_k\phi$ with the original penalty function $\phi$.
    First, second and third column correspond to $k=2,4,6$ respectively. (a)--(c) Heaviside step penalty in Eq.~\eqref{eq:heaviside}; (d)--(f) Hyperbolic tangent penalty
    $\frac{1}{2}(\tanh(y)+1)$; (g)--(i) Exponential penalty $\exp(0.1y)$; (j)--(l) ReLU in Eq.~\eqref{eq:relu}. The vertical dashed line shows $\ba^\top\bx -b=0$ the boundary between
    feasible and infeasible constraint values.}
    \label{fig:walsh-approx}
\end{figure*}

\begin{figure}
    \centering
    \includegraphics[width=0.95\linewidth]{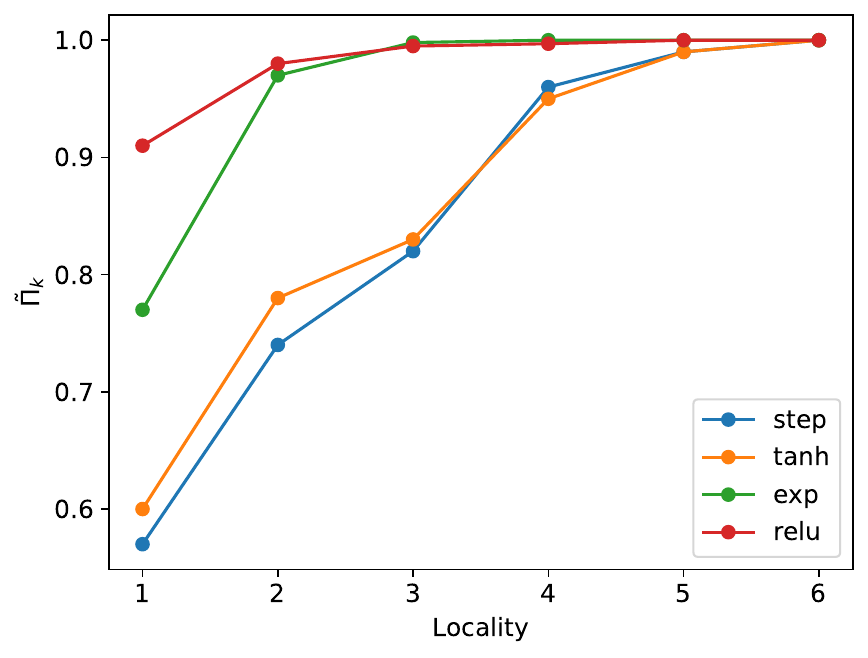}
    \caption{The fraction of retained power $\Pi_k$ of the truncated Walsh series against the truncation degree $k$ Eq.~\eqref{eq:retained-power}.
    Different functions require different degree to achieve the full spectral power. The exponential and ReLU achieved full spectral power at $k=3$,
    whereas Heaviside and hyperbolic tangent only achieved full spectral power at $k=6$ (the full spectrum).}
    \label{fig:pi-k}
\end{figure}

Next, we solve the Knapsack problem using WCP-QAOA, with the statevector simulation to simulate the QAOA circuit.
We choose the Heaviside step function as the penalty function. 
The solution bit-string is obtained as the bit-string with highest probability in the output statevector from the trained QAOA circuit.
Fig.~\ref{fig:qaoa-kp6} shows the results in box plots, where each box contains the results for 10 random initializations.
Several metrics are investigated against the truncation locality $k$ of the Walsh spectrum and the QAOA circuit depth $p$.
Fig.~\ref{fig:qaoa-kp6}(a) shows the gap between the QAOA solution objective and the ground truth (GT), defined as
\begin{equation}
    \Delta_{o}^{\text{GT}} = 1 - \frac{C_\text{QAOA}}{C_\text{GT}},
    \label{eq:gt-obj-gap}
\end{equation}
where $C$ is the objective \eqref{eq:kp-obj} evaluated at the solution.
Fig.~\ref{fig:qaoa-kp6}(b) shows the probability of sampling the GT solution $p_{\text{GT}}$.
Fig.~\ref{fig:qaoa-kp6}(c) shows the GT loss gap, the form is similar to Eq.~\eqref{eq:gt-obj-gap}, but the $C$ are replaced with losses:
\begin{equation}
    \Delta_{l}^{\text{GT}} = 1 - \frac{\mathcal{L}_\text{QAOA}}{\mathcal{L}_\text{GT}}.
    \label{eq:gt-loss-gap}
\end{equation}
Fig.~\ref{fig:qaoa-kp6}(d) shows the loss gap, which is defined as
\begin{equation}
    \Delta_{l}^{\min} = 1 - \frac{\mathcal{L}_\text{QAOA}}{\mathcal{L}_{\min}}.
    \label{eq:loss-gap}
\end{equation}
The distinction between Eq.~\eqref{eq:gt-loss-gap} and \eqref{eq:loss-gap} is required as the truncated Walsh series with $k < n$ might distort the Hamiltonian spectrum,
causing the solution with the lowest eigenvalue in the Hamiltonian with the truncated Walsh series $H_C^{(k)}$ to differ from that in $H_C$,
creating a false minimum in $H_C^{(k)}$.
The objective is nearer to the GT when $\Delta_o^\text{GT}$ is nearer to zero. Negative value of $\Delta_o^{\text{GT}}$ implies an infeasible solution that
has objective greater than the GT. This applies to other gaps as well.

From Fig.~\ref{fig:qaoa-kp6}(a), at $p=1$, $\Delta_{o}^{\text{GT}}$ have a large variance, and the variance decreases as the depth increases.
This shows that shallow QAOA is not expressive enough to find the solution, where it improves when the QAOA circuit becomes deeper.
QAOA also performs better with increased truncation locality $k$, which can be observed for deeper QAOA, the GT objective gap gets more concentrated to zero as $k$ increases.
Fig.~\ref{fig:qaoa-kp6}(b) shows this clearly with an increased probability of sampling the GT with increased $k$.
However, the probability only increases drastically from $k=6$, whereas it stays low in other truncation degrees, hinting that false minima continue to persist until $k=5$.
Fig.~\ref{fig:qaoa-kp6}(c) and (d) show that both $\Delta_{l}^{\text{GT}}$ and $\Delta_{l}^{\min}$ decrease as depth increases.
However, $\Delta_{l}^{\text{GT}}$ for $k=2$ is very near to zero, which is counterintuitive. This itself does not say that the QAOA is successful at $k=2$,
since both $\Delta_{o}^{\text{GT}}$ and $p_{\text{GT}}$ are bad at $k=2$. 

\begin{figure*}
    \centering
    \subfloat[]{%
        \includegraphics[width=0.48\textwidth]{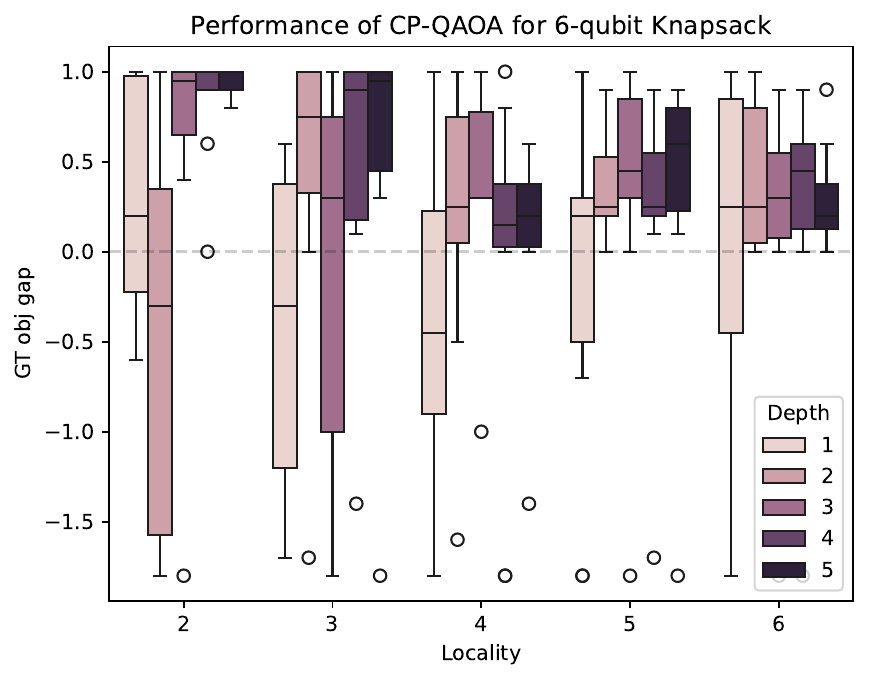}%
    }%
    \hfill
    \subfloat[]{%
        \includegraphics[width=0.48\textwidth]{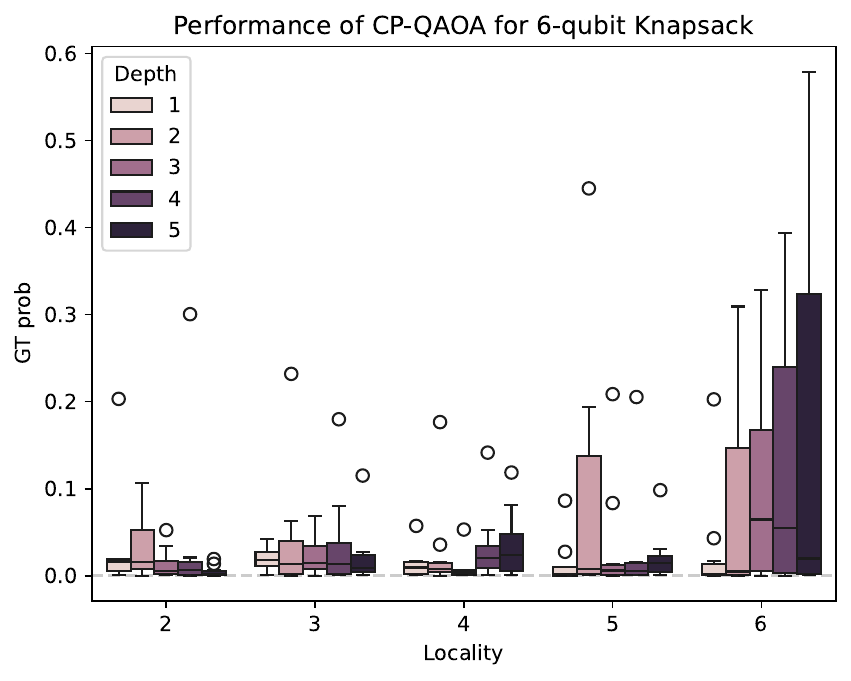}%
    }%
    \hfill
    \subfloat[]{%
        \includegraphics[width=0.48\textwidth]{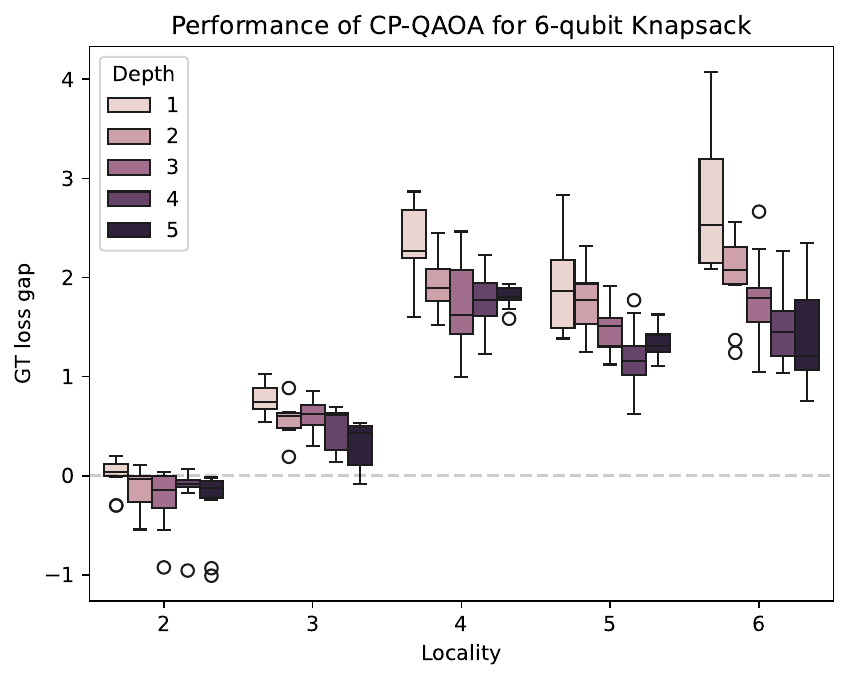}%
    }%
    \hfill
    \subfloat[]{%
        \includegraphics[width=0.48\textwidth]{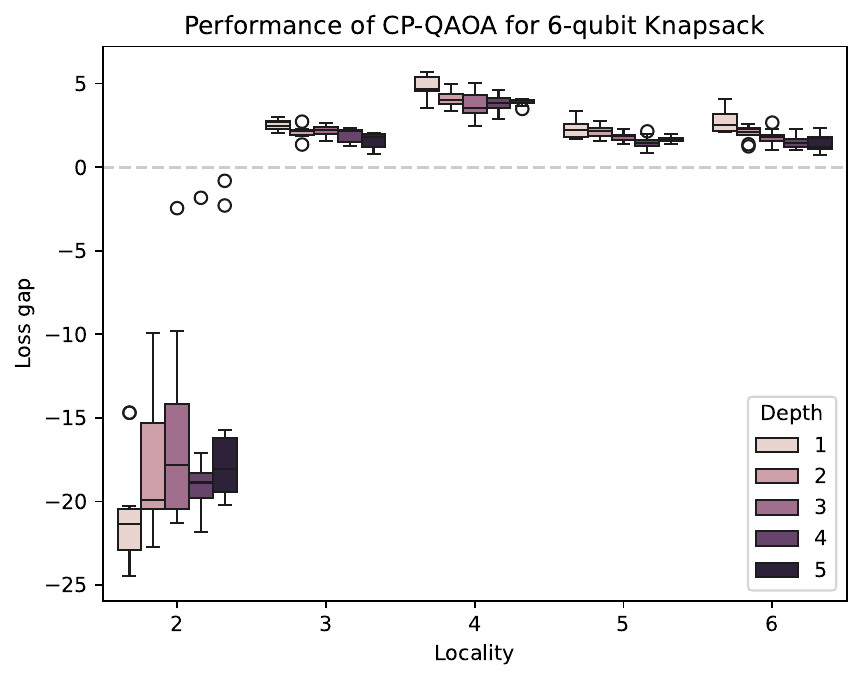}%
    }%
    \caption{Evaluation metrics for a 6-qubit Knapsack problem. WCP-QAOA is tested at different localities $k$ (truncation) and QAOA depths.
    We use the bit-string with the highest probability as the solution from QAOA. Each box shows the result for 10 random initializations. 
    The gaps have values with 0 closest to the target and larger magnitudes further to the target.
    (a) The objective gap with the ground truth (GT). (b) The probability of sampling the GT. (c) The gap of the final loss value with the GT.
    (d) The gap of the final loss value with the lowest eigenvalue in the Hamiltonian. The lowest eigenvalue of the Hamiltonian is not necessarily (in fact, most of the cases)
    the GT for $k < n$.}
    \label{fig:qaoa-kp6}
\end{figure*}

\section{Discussions}
\label{sec:discussions}

\subsection{Truncated Walsh spectrum is optimal in $L^2$}
\label{sec:walsh-opt}

We prove that the degree-$k$ truncation $T_k\phi$ is the best approximation for $\phi$, i.e. $T_k\phi$ has the least square error with respect to the original function $\phi$ 
in the multilinear space up to degree-$k$.

\begin{definition}[Inner product of functions]
    Let $u,v: \{0,1\}^n \rightarrow \mathbb{R}$, the inner product $\inner{u}{v}$ is defined as
    \begin{equation}
        \inner{u}{v} = \frac{1}{2^n}\sum_{\bx\in\{0,1\}^n} u(\bx) v(\bx).
    \end{equation}
    \label{def:inner-func}
\end{definition}

\begin{lemma}[Orthonormality of Walsh characters]
    The Walsh characters $\chi_S$ form an orthonormal basis for the vector space $V$ of the functions $\{0,1\}^n \rightarrow \mathbb{R}$. Therefore,
    \begin{equation}
        \inner{\chi_S}{\chi_T} = 
        \begin{cases}
            1 & \text{if } S = T \\
            0 & \text{if } S \neq T.
        \end{cases}
    \end{equation}
    \label{lem:inner-chi}
\end{lemma}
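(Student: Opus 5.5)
The plan is to compute $\inner{\chi_S}{\chi_T}$ directly from Definition~\ref{def:inner-func} and Eq.~\eqref{eq:character}. I will then obtain the basis property by a dimension count.

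The first step reduces the product of two characters to a single character. Since $\chi_S(\bx) = (-1)^{\sum_{i\in S}x_i}$, I will write
\[
\chi_S(\bx)\chi_T(\bx) = (-1)^{\sum_{i\in S}x_i + \sum_{i\in T}x_i} = (-1)^{\sum_{i\in S\triangle T}x_i} = \chi_{S\triangle T}(\bx).
\]
This holds because each index $i\in S\cap T$ contributes $2x_i$ to the exponent, which is even and can be dropped. Consequently
\[
\inner{\chi_S}{\chi_T} = \frac{1}{2^n}\sum_{\bx}\chi_{S\triangle T}(\bx),
\]
and it suffices to show that $\frac{1}{2^n}\sum_{\bx}\chi_U(\bx)$ equals $1$ when $U=\varnothing$ and $0$ otherwise.

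The second step evaluates this average. For $U=\varnothing$ we have $\chi_\varnothing\equiv 1$, so the average is $1$; this is the case $S=T$. For $U\neq\varnothing$, the sum factorizes over coordinates:
\[
\sum_{\bx}\prod_{i\in U}(-1)^{x_i} = \prod_{i\in U}\Big(\sum_{x_i\in\{0,1\}}(-1)^{x_i}\Big)\cdot 2^{\,n-|U|}.
\]
Each factor $\sum_{x_i}(-1)^{x_i} = 1-1 = 0$, so the whole sum is $0$. An alternative argument is the sign-reversing involution that flips one fixed bit $j\in U$. Either way, orthonormality follows.

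The third step establishes the basis property. The space $V$ of real functions on $\{0,1\}^n$ has dimension $2^n$, since the point indicators form a basis. There are exactly $2^n$ subsets $S\subseteq[n]$, and orthonormal vectors are linearly independent: pairing $\sum_S c_S\chi_S = 0$ with $\chi_T$ gives $c_T=0$. Hence the characters form a basis of $V$. This also recovers the uniqueness of the expansion in Eq.~\eqref{eq:wht}, and, by taking $\inner{\phi}{\chi_S}$, the coefficient formula Eq.~\eqref{eq:coeff}.

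No step here is a real obstacle. The only point requiring care is the symmetric-difference identity in the first step, namely justifying that the indices in $S\cap T$ cancel mod $2$. After that, the factorization in the second step is routine.
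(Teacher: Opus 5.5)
Your proof is correct and complete. The paper gives no proof of this lemma; it states it as a standard fact from the Boolean-function literature it cites, so there is no competing route to compare against.

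Your argument is the standard one:
\begin{itemize}
\item reduce $\chi_S\chi_T$ to $\chi_{S\triangle T}$;
\item show that $\frac{1}{2^n}\sum_{\bx}\chi_U(\bx)$ equals $1$ for $U=\varnothing$ and $0$ otherwise, by factorizing over coordinates;
\item obtain the basis property from linear independence together with $\dim V = 2^n$.
\end{itemize}

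One point in your favour concerns the order of the logic. The lemma as written says the characters form an orthonormal basis and ``therefore'' the inner products are $\delta_{S,T}$, which runs backwards. You establish the orthonormality relations first and deduce the basis property from them, which is the order that actually constitutes a proof. Your version also justifies two things the paper uses without argument: the uniqueness of the expansion in Eq.~\eqref{eq:wht} and the coefficient formula Eq.~\eqref{eq:coeff}.
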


\begin{lemma}[$T_k\phi$ is an orthogonal projection]
    Let $\phi: \{0,1\}^n \rightarrow \mathbb{R}$ and $V_k = \Span\{\chi_S: |S|\leq k\}$. $T_k\phi$ is an orthogonal projection to $\phi$, i.e.
    \begin{equation}
        \inner{\phi-T_k\phi}{\chi_T} = 0,
    \end{equation}
    where $T$ satisfies $|T|\leq k$.
    \label{lem:tk-orth}
\end{lemma}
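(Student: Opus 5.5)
The plan is a direct computation using bilinearity of the inner product of Definition~\ref{def:inner-func} together with the orthonormality of Lemma~\ref{lem:inner-chi}. Fix $T$ with $|T|\le k$. First, write $\phi$ in its full Walsh expansion, Eq.~\eqref{eq:wht}. Subtracting the truncation of Eq.~\eqref{eq:trunc-phi} cancels every term with $|S|\le k$, leaving only the tail
\begin{equation*}
\phi - T_k\phi = \sum_{|S|>k}\hat{\phi}(S)\,\chi_S .
\end{equation*}
This identity is exact because the expansion in Eq.~\eqref{eq:wht} is unique and $T_k\phi$ is defined with the same coefficients $\hat\phi(S)$.

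Second, take the inner product with $\chi_T$. By linearity in the first argument, which holds because the inner product is a finite sum,
\begin{equation*}
\inner{\phi-T_k\phi}{\chi_T} = \sum_{|S|>k}\hat{\phi}(S)\inner{\chi_S}{\chi_T}.
\end{equation*}
Every index in this sum has $|S|>k\ge|T|$, so $S\neq T$. By Lemma~\ref{lem:inner-chi}, each inner product therefore vanishes, and the whole sum is zero.

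Third, the conclusion extends from the basis to the whole subspace. By linearity again, $\phi-T_k\phi$ is orthogonal to all of $V_k$. Since $T_k\phi\in V_k$ by construction, $T_k\phi$ is the orthogonal projection of $\phi$ onto $V_k$.

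As a consistency check, $\inner{\phi}{\chi_T}=\hat\phi(T)$ by Eq.~\eqref{eq:coeff}, and $\inner{T_k\phi}{\chi_T}=\hat\phi(T)$ by orthonormality, so the difference vanishes. This gives an alternative one-line route to the same result.

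There is no real obstacle. The only point requiring care is the justification that $S\neq T$ for every surviving term, which rests on the cardinality mismatch $|S|>k\ge |T|$.
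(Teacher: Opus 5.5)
Your proof is correct and is essentially the paper's argument: the paper's proof is exactly your ``consistency check'' route, computing $\inner{\phi}{\chi_T}=\hat\phi(T)$ from Eq.~\eqref{eq:coeff} and $\inner{T_k\phi}{\chi_T}=\hat\phi(T)$ by orthonormality. Your primary route via the tail $\sum_{|S|>k}\hat\phi(S)\chi_S$ is equally valid, but it also relies on the full expansion Eq.~\eqref{eq:wht}, i.e.\ on the completeness of the characters, which the paper's version does not need.
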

\begin{proof}
    We want to prove that $\phi-T_k\phi$ is orthogonal to the bases of $V_k$. 
    \begin{align}
        \inner{\phi-T_k\phi}{\chi_T} & = \inner{\phi}{\chi_T} - \inner{T_k\phi}{\chi_T} \\
        & = \inner{\phi}{\chi_T} - \inner{\sum_{|S|\leq k} \hat{\phi}(S) \chi_S}{\chi_T} \\ 
        & = \inner{\phi}{\chi_T} - \sum_{|S|\leq k} \hat{\phi}(S) \inner{\chi_S}{\chi_T}.
    \end{align}
    Using Definition~\ref{def:inner-func} and Eq.~\eqref{eq:coeff} for the first term, and Lemma~\ref{lem:inner-chi} for the second term gives
    \begin{equation}
        \inner{\phi-T_k\phi}{\chi_T} = \hat\phi(T) - \hat\phi(T) = 0.
    \end{equation}
\end{proof}

\begin{lemma}[Pythagorean theorem for Hilbert space]
    Any pair of orthogonal vectors $\inner{u}{v} = 0$ yields
    \begin{equation}
        \twonorm{u + v}^2 = \twonorm{u}^2 + \twonorm{v}^2 .
    \end{equation}
    Therefore, 
    \begin{align}
        \phi & = (\phi - T_k\phi) + T_k\phi \\
        \twonorm{\phi}^2 & = \twonorm{\phi - T_k\phi}^2 + \twonorm{T_k\phi}^2. 
    \end{align}
    \label{lem:pythagorean}
\end{lemma}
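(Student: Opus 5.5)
The statement to prove is Lemma~\ref{lem:pythagorean}: the general Pythagorean identity for orthogonal $u,v$, followed by its specialization to $\phi = (\phi - T_k\phi) + T_k\phi$.

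For the general identity, I will expand the squared norm using bilinearity and symmetry of the inner product in Definition~\ref{def:inner-func}, with $\twonorm{w}^2 = \inner{w}{w}$:
\begin{equation}
\twonorm{u+v}^2 = \inner{u+v}{u+v} = \twonorm{u}^2 + 2\inner{u}{v} + \twonorm{v}^2 .
\end{equation}
Setting $\inner{u}{v}=0$ removes the cross term and gives the identity. Bilinearity and symmetry follow at once because the inner product is a finite, normalized sum of products $u(\bx)v(\bx)$.

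For the specialization, take $u = \phi - T_k\phi$ and $v = T_k\phi$. The decomposition $\phi = u + v$ is trivial, so the only thing to check is $\inner{u}{v} = 0$. Lemma~\ref{lem:tk-orth} gives $\inner{\phi - T_k\phi}{\chi_T}=0$ for every $|T|\le k$. By definition, $T_k\phi = \sum_{|T|\le k}\hat\phi(T)\chi_T$ lies in $V_k$. Linearity in the second argument then gives
\begin{equation}
\inner{u}{v} = \sum_{|T|\le k}\hat\phi(T)\inner{\phi - T_k\phi}{\chi_T} = 0 .
\end{equation}
Applying the general identity yields $\twonorm{\phi}^2 = \twonorm{\phi - T_k\phi}^2 + \twonorm{T_k\phi}^2$.

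There is no real obstacle in this argument. The one point that needs care is going from orthogonality to the basis vectors $\chi_T$, which is what Lemma~\ref{lem:tk-orth} states, to orthogonality to the whole element $T_k\phi$. That step is the linearity expansion above.

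As a sanity check, the result should be consistent with Parseval. By Lemma~\ref{lem:inner-chi}, $\twonorm{T_k\phi}^2 = \sum_{|S|\le k}\hat\phi(S)^2$. Combined with Eq.~\eqref{eq:parseval}, this reproduces Eq.~\eqref{eq:err-l2}.
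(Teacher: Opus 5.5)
Your proof is correct and follows the paper's route: the paper also expands $\twonorm{u+v}^2 = \inner{u+v}{u+v}$ by bilinearity and drops the cross terms. Your explicit check that $\inner{\phi - T_k\phi}{T_k\phi} = 0$, via Lemma~\ref{lem:tk-orth} and linearity, spells out a step the paper leaves implicit in its ``Therefore,'' so it fills in a detail rather than changing the argument.
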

\begin{proof}
    Suppose $u,v$ satisfies $\inner{u}{v} = 0$, then 
    \begin{align}
        \twonorm{u + v}^2 & = \inner{u+v}{u+v} \\
        & = \inner{u}{u} + \inner{u}{v} + \inner{v}{u} + \inner{v}{v} \\ 
        & = \twonorm{u}^2 + \twonorm{v}^2.
    \end{align}
\end{proof}

\begin{theorem}[$T_k\phi$ is the best approximation for $\phi$]
    For any arbitrary function $u\in V_k$, its $L^2$-norm difference between $\phi$ is lower bounded by the difference between $\phi$ and $T_k\phi$:
    \begin{equation}
        \twonorm{\phi-u} \geq  \twonorm{\phi-T_k\phi},
    \end{equation}
    which means that $T_k\phi$ has the least square difference with $\phi$ in the vector space.
    \label{thm:walsh-opt}
\end{theorem}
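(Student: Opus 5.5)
The plan is the standard projection argument: split $\phi - u$ into an orthogonal sum and apply Lemma~\ref{lem:pythagorean}. Fix an arbitrary $u \in V_k$ and write
\begin{equation}
    \phi - u = (\phi - T_k\phi) + (T_k\phi - u).
\end{equation}
The first summand is the truncation residual. The second summand lies in $V_k$, since $T_k\phi$ is by Eq.~\eqref{eq:trunc-phi} a linear combination of $\chi_S$ with $|S|\le k$, and $u \in V_k$ by assumption.

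The first step is to show that the two summands are orthogonal. By Lemma~\ref{lem:tk-orth}, $\inner{\phi - T_k\phi}{\chi_T} = 0$ for every $|T|\le k$. Expanding $T_k\phi - u = \sum_{|T|\le k} c_T \chi_T$ with real coefficients $c_T$, bilinearity of the inner product in Definition~\ref{def:inner-func} gives
\begin{equation}
    \inner{\phi - T_k\phi}{T_k\phi - u} = \sum_{|T|\le k} c_T \inner{\phi - T_k\phi}{\chi_T} = 0.
\end{equation}

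Next, Lemma~\ref{lem:pythagorean} applied to this orthogonal pair yields
\begin{equation}
    \twonorm{\phi - u}^2 = \twonorm{\phi - T_k\phi}^2 + \twonorm{T_k\phi - u}^2 \ge \twonorm{\phi - T_k\phi}^2.
\end{equation}
Taking nonnegative square roots gives the claimed inequality.

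The argument also identifies when equality holds. Equality requires $\twonorm{T_k\phi - u} = 0$. Since $\twonorm{\cdot}$ here is the norm induced by the inner product of Definition~\ref{def:inner-func}, which is positive definite on functions of $\{0,1\}^n$, this forces $u = T_k\phi$. Hence $T_k\phi$ is the unique minimizer, which matches the paper's least-squares interpretation.

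I expect no real obstacle. The only point needing care is consistency of norms: $\twonorm{\cdot}$ must be the norm induced by $\inner{\cdot}{\cdot}$, with the $2^{-n}$ normalization. That normalization is what makes Lemma~\ref{lem:inner-chi} give orthonormality rather than orthogonality, and what makes Lemma~\ref{lem:tk-orth} valid. If one instead used the unnormalized Euclidean norm on vectors $\bphi$, both sides of the inequality scale by the same factor $2^{n/2}$, so the conclusion is unchanged.
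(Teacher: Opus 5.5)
Your proof is correct and follows the paper's argument essentially step for step: you use the same decomposition $\phi-u=(\phi-T_k\phi)+(T_k\phi-u)$, orthogonality from Lemma~\ref{lem:tk-orth}, and the Pythagorean identity of Lemma~\ref{lem:pythagorean}. Your explicit bilinearity step, which the paper compresses into ``$\phi-T_k\phi\perp V_k$'', and your observation that the minimizer is unique are small but welcome additions.
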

\begin{proof}
    Let $u\in V_k$. Since $T_k\phi\in V_k$, we have $T_k\phi - u\in V_k$. Since $\phi-T_k\phi \perp V_k$, we have $\phi-T_k\phi \perp T_k\phi -u$.
    From Lemma~\ref{lem:pythagorean},
    \begin{align}
        \phi - u & = (\phi - T_k\phi) + (T_k\phi - u) \\
        \twonorm{\phi - u}^2 & = \twonorm{\phi - T_k\phi}^2 + \twonorm{T_k\phi - u}^2.
    \end{align}
    Since all the terms are positive, we conclude that
    \begin{equation}
        \twonorm{\phi - u} \geq \twonorm{\phi-T_k\phi}.
    \end{equation}
\end{proof}

In fact, we can also show that any orthogonal projection of $\phi$ is the best approximation of it. We wrote Theorem~\ref{thm:walsh-opt} in a way that suits our context of 
the truncated Walsh series.

The consequence of Theorem~\ref{thm:walsh-opt} is that one cannot get a better approximation for $\phi$ than the truncated Walsh series $T_k\phi$, since it already has the least square error.
This also means that the approximation of a function via the truncation of Taylor series, as used in the unbalanced penalization~\cite{ub-penalty}, is only equal or worse than the 
truncated Walsh series, which we show later in the section.

\subsection{Closed-form spectrum for exponential penalty and the relation to unbalanced penalization}
\label{sec:exp-spectrum}

\begin{proposition}[Closed-form Walsh spectrum for exponential penalty]
\label{prop:closed-exp}
    The exponential penalty can be expressed as the product of 1-local spins:
    \begin{align}
       \phi_\eta(\bx) & = e^{\eta(\ba^\top\bx-b)}\\
       \phi_\eta(\bz) & = e^{-\eta b}\prod_i (\alpha_i + \delta_iz_i),
       \label{eq:exp-closed}
    \end{align}
    where 
    \begin{align}
        \alpha_i & = \tfrac{1}{2}\big(1 + e^{\eta a_i}\big) \label{eq:alpha} \\
        \delta_i & = \tfrac{1}{2}\big(1 - e^{\eta a_i}\big). \label{eq:delta}
    \end{align}
    In terms of Pauli-$Z$,
    \begin{equation}
        e^{-\eta b} \prod_i (\alpha_i + \delta_i Z_i) = e^{-\eta b} \bigotimes_i
        \begin{pmatrix}
            1 & 0 \\
            0 & e^{\eta a_i}
        \end{pmatrix}.
    \end{equation}
    In fact, the expansion of the product is not required because the Walsh coefficients have a closed-form:
    \begin{equation}
        \hat{\phi}_\eta(S) = e^{-\eta b}\prod_{i\in S}\delta_i \prod_{i\notin S}\alpha_i,
        \label{eq:exp-coeffs}
    \end{equation}
    Equivalently, we can write $t_i = \tanh(\eta a_i/2)$, then
    \begin{equation}
        \hat{\phi}_\eta(S) = \hat{\phi}_\eta(\varnothing)\prod_{i\in S}\left(-t_i\right),
        \label{eq:tanhform}
    \end{equation}
    where
    \begin{equation}
        \hat{\phi}_\eta(\varnothing) = e^{\eta(\frac{1}{2}\onenorm{\ba} - b)}\prod_{i=1}^{n}\cosh\frac{\eta a_i}{2}.
        \label{eq:tanhform-const}
    \end{equation}
\end{proposition}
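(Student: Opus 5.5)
The approach is to exploit that the exponential of a linear form factorizes over the bits. We first write $\phi_\eta(\bx)=e^{-\eta b}\prod_i e^{\eta a_i x_i}$. Since each $x_i\in\{0,1\}$, each factor $e^{\eta a_i x_i}$ is a function of one bit taking the value $1$ at $x_i=0$ and $e^{\eta a_i}$ at $x_i=1$. Substituting the spin map Eq.~\eqref{eq:spinmap} with $z_i=(-1)^{x_i}$, the unique affine function of $z_i$ matching these two values is $\alpha_i+\delta_i z_i$ with $\alpha_i,\delta_i$ as in Eqs.~\eqref{eq:alpha}--\eqref{eq:delta}. This can be checked directly: $\alpha_i+\delta_i=1$ and $\alpha_i-\delta_i=e^{\eta a_i}$. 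This gives Eq.~\eqref{eq:exp-closed}.

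The operator statement then follows by replacing $z_i\to Z_i$. Here $\alpha_i I+\delta_i Z_i=\mathrm{diag}(\alpha_i+\delta_i,\alpha_i-\delta_i)=\mathrm{diag}(1,e^{\eta a_i})$. Because the factors act on distinct qubits, their product is the tensor product, which is consistent with Lemma~\ref{lem:dictionary}.

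For the coefficients, we expand the product: $\prod_i(\alpha_i+\delta_i z_i)=\sum_{S\subseteq[n]}\big(\prod_{i\in S}\delta_i\prod_{i\notin S}\alpha_i\big)\chi_S(\bz)$. Every subset $S$ appears exactly once, choosing $\delta_iz_i$ for $i\in S$ and $\alpha_i$ otherwise. By the uniqueness of the Walsh expansion, which follows from orthonormality in Lemma~\ref{lem:inner-chi}, these must be the coefficients $\hat\phi_\eta(S)$, giving Eq.~\eqref{eq:exp-coeffs}. As an independent check, we can compute Eq.~\eqref{eq:coeff} directly. The sum factorizes as $\prod_i \tfrac12\sum_{x_i}e^{\eta a_i x_i}(-1)^{x_i[i\in S]}$, which evaluates to $\delta_i$ or $\alpha_i$.

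For the tanh form, we factor $\alpha_i=e^{\eta a_i/2}\cosh(\eta a_i/2)$ and $\delta_i=-e^{\eta a_i/2}\sinh(\eta a_i/2)$, so that $\delta_i/\alpha_i=-t_i$. Setting $S=\varnothing$ gives $\hat\phi_\eta(\varnothing)=e^{-\eta b}\prod_i\alpha_i=e^{\eta(\onenorm{\ba}/2-b)}\prod_i\cosh(\eta a_i/2)$, which is Eq.~\eqref{eq:tanhform-const}. Dividing Eq.~\eqref{eq:exp-coeffs} by this yields Eq.~\eqref{eq:tanhform}.

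There is no real obstacle here. The only point needing care is the sign convention $z_i=1-2x_i$, so that $x_i=1$ corresponds to $z_i=-1$. This convention is what makes $\delta_i$ negative and produces the $-t_i$.
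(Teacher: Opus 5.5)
Your proof is correct and follows essentially the paper's argument: both rest on the single-bit factorization $e^{\eta a_i x_i}=\alpha_i+\delta_i z_i$. For the coefficients, the paper factorizes the transform sum bit by bit (in the spin variables), exactly as in your ``independent check''; your primary route, expanding the product and invoking uniqueness of the Walsh expansion, is an equally valid shortcut. You also verify the operator identity and the $\tanh$ form, Eqs.~\eqref{eq:tanhform}--\eqref{eq:tanhform-const}, which the paper's appendix leaves implicit; note that the step $\prod_i e^{\eta a_i/2}=e^{\eta\onenorm{\ba}/2}$ uses the standing assumption $a_i\ge 0$.
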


The proof for the above is shown in Appendix~\ref{app:exp-product}.
This shows that for the exponential penalty, no transform has to be computed at all.
Every coefficient is a product of $n$ known numbers, obtainable in $\BigO(|S|)$ time once $\hat{\phi}_\eta(\varnothing)$ and the $t_i$ are cached.
It also exposes the structure that governs truncation. Since $0 \le t_i < 1$, the magnitudes decay multiplicatively as $S$ grows,
and the decay rate of qubit $i$ is set by $\eta a_i$: a variable with a larger constraint coefficient contributes more to the high-degree terms.

Although the exponential penalty has nice properties, its caveat is also obvious. 
Unlike the Heaviside step function which has a clear jump of value between feasible and infeasible, scaled by the penalty factor,
the exponential penalty values are relatively small at small values of infeasible regime $h(\bx) > 0$, which might not have enough separation with the feasible regime.
Then, the penalty value skyrockets as the constraint value increases, which might cause numerical issues. 
Hence, the exponential penalty must be carefully designed during implementation. 

Unbalanced penalization~\cite{ub-penalty} implements the exponential penalty $\phi_\eta$ with
\begin{equation}
    U^{(2)}_\eta(\bx) = 1 + \eta h(\bx) + \tfrac{1}{2}\eta^2 h(\bx)^2 ,
    \label{eq:unbalanced}
\end{equation}
which is the degree-2 Taylor truncation. Hence, we write
\begin{equation}
    U^{(k)}_\eta(\bx) = \sum_{m=0}^{k}\frac{\big(\eta h(\bx)\big)^{m}}{m!}
    \label{eq:taylor-k}
\end{equation}
for the order-$k$ Taylor truncation.

\begin{corollary}[Taylor truncation is lower-bounded by Walsh truncation]
    For any Taylor truncation with truncation degree $k \geq 0$, its error with the original function is lower bounded by the degree-$k$ Walsh truncation:
    \begin{equation}
        \twonorm{\phi_\eta - T_k\phi_\eta} \leq \twonorm{\phi_\eta - U_\eta^{(k)}}.
    \end{equation}
\end{corollary}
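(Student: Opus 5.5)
The plan is to reduce the corollary to Theorem~\ref{thm:walsh-opt}. That theorem already states $\twonorm{\phi-u}\ge\twonorm{\phi-T_k\phi}$ for every $u\in V_k=\Span\{\chi_S:|S|\le k\}$. So the only work is to show that the Taylor surrogate $U^{(k)}_\eta$, viewed as a function on $\{0,1\}^n$, lies in $V_k$. Once that membership is established, applying Theorem~\ref{thm:walsh-opt} with $\phi=\phi_\eta$ and $u=U^{(k)}_\eta$ gives the claim directly.

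The membership step goes as follows. By Eq.~\eqref{eq:one-constraint}, $h(\bx)=\ba^\top\bx-b$ is affine in $\bx$. Under the spin map Eq.~\eqref{eq:spinmap} it becomes $h=\tfrac12\onenorm{\ba}-b-\tfrac12\sum_i a_i z_i$, which is a combination of $\chi_\varnothing$ and the $\chi_{\{i\}}$, so $h\in V_1$.

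Next I show that $h^m\in V_m$ by induction on $m$. The key observation is that the product of a character $\chi_S$ with $|S|\le m-1$ and a character $\chi_{\{i\}}$ is $\chi_{S\triangle\{i\}}$, because $z_i^2=1$ on $\{\pm1\}$. The symmetric difference has size at most $m$, so each term of $h^{m}=h^{m-1}\cdot h$ lands in $V_m$.

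Multilinearization can only lower the degree, never raise it. Hence each term $(\eta h)^m/m!$ with $m\le k$ lies in $V_m\subseteq V_k$. Since $V_k$ is a linear subspace, the finite sum $U^{(k)}_\eta$ also lies in $V_k$.

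With $U^{(k)}_\eta\in V_k$ in hand, Theorem~\ref{thm:walsh-opt} yields
\begin{equation}
\twonorm{\phi_\eta-U^{(k)}_\eta}\ \ge\ \twonorm{\phi_\eta-T_k\phi_\eta}.
\end{equation}
This is the statement, with the norm taken in the sense of Definition~\ref{def:inner-func}. The $k=0$ case is trivial, since $U^{(0)}_\eta=1\in V_0$.

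The only subtle point, and what I would flag as the main obstacle, is to be careful that the comparison is made as functions on the hypercube. The Taylor polynomial is a polynomial in $\bx$ of formal degree $k$. As a real-variable polynomial it contains powers $x_i^2$ that are not multilinear, so it does not obviously belong to $V_k$. Reducing with $x_i^2=x_i$ (equivalently $z_i^2=1$) fixes this, and the degree-nonincreasing argument above handles it. No property of the exponential beyond the definition of $U^{(k)}_\eta$ is needed.

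As a remark, by the Pythagorean identity in the proof of Theorem~\ref{thm:walsh-opt}, the gap between the two squared errors is exactly $\twonorm{T_k\phi_\eta-U^{(k)}_\eta}^2$. That gap could be made explicit using the closed form of Proposition~\ref{prop:closed-exp}, though it is not required.
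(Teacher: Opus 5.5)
Your proposal is correct and follows the paper's argument: show $U^{(k)}_\eta\in V_k$, then apply Theorem~\ref{thm:walsh-opt} with $u=U^{(k)}_\eta$. The paper simply asserts that $U^{(k)}_\eta$ is multilinear of degree at most $k$, whereas your induction using $\chi_S\chi_{\{i\}}=\chi_{S\triangle\{i\}}$ (from $z_i^2=1$) proves this membership as an identity of functions on the hypercube.
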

\begin{proof}
    Since $U_\eta^{(k)}$ is multilinear up to degree-$k$, $U_\eta^{(k)} \in V_k$.
    Hence, the lower bound can be derived directly from Theorem~\ref{thm:walsh-opt} by replacing the function $u$ with the truncated Taylor $U_\eta^{(k)}$.
\end{proof}
This shows that Walsh truncation is never worse than the exponential penalty compared to Taylor truncation, with the same degree of truncation $k$.
There are several advantages in choosing the Walsh penalty. First, the penalty has to be differentiable at $h(\bx)=0$ to be expressed with the Taylor expansion,
which clearly does not apply to functions like the step function or ReLU. 
Second, the Walsh series exactly reproduces the original penalty with finite terms at $k=n$, in which the Taylor series will never achieve because the series is infinite.
Third, the Walsh coefficients for the exponential penalty are easy to compute, as stated in Proposition~\ref{prop:closed-exp}.

\subsection{Structural shortcuts}
\label{sec:structure}

We discuss some of the structural shortcuts that will further reduce the computational cost of WCP-QAOA.
These shortcuts are usually caused by the structures that are inherent to the problem. 

\begin{proposition}[Sparse constraints]
\label{prop:sparse}
Let $R = \supp(\ba)$, the number of nonzero elements in $\ba$, with $|R| = r$.
Then $\hat\phi(S) = 0$ unless $S\subseteq R$, so at most $2^r$ coefficients are nonzero and $H_\phi$ acts on $r$ qubits.
\end{proposition}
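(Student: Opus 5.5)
The plan is to show directly from the definition in Eq.~\eqref{eq:coeff} that any coefficient whose index set reaches outside $R$ vanishes. The mechanism is a pairing (sign-reversing involution) on the hypercube. Fix $S$ with $S\not\subseteq R$ and pick an index $j\in S\setminus R$, so that $a_j=0$. For each bit-string $\bx$, let $\bx^{(j)}$ denote $\bx$ with bit $j$ flipped. Because $a_j=0$, we have $\ba^{\top}\bx^{(j)}=\ba^{\top}\bx$, and hence $\phi(\bx^{(j)})=\xi(\ba^{\top}\bx-b)=\phi(\bx)$. Because $j\in S$, Eq.~\eqref{eq:character} gives $\chi_S(\bx^{(j)})=-\chi_S(\bx)$.

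Splitting the sum in Eq.~\eqref{eq:coeff} into the pairs $\{\bx,\bx^{(j)}\}$, that is, summing over $x_j\in\{0,1\}$ with the other bits fixed, each pair contributes
\begin{equation*}
\phi(\bx)\chi_S(\bx)+\phi(\bx^{(j)})\chi_S(\bx^{(j)})=0 .
\end{equation*}
Therefore $\hat\phi(S)=0$. Alternatively, the same conclusion can be read off from Theorem~\ref{thm:sub-exp-coeff}. There, $\bm g_j$ with $a_j=0$ places both its $+\tfrac12$ entry at $s=0$ and its $-\tfrac12$ entry at $s=a_j=0$, so $\bm g_j$ is the zero vector. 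The convolution in Eq.~\eqref{eq:convolution} then gives $\bm\omega_S\equiv 0$, and hence $\hat\phi(S)=0$. I would present the pairing argument as the main proof and mention the convolution view as a consistency check.

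The counting statement follows immediately. The only possibly nonzero coefficients are those indexed by $S\subseteq R$, and there are $2^r$ such subsets. Every surviving Pauli string $Z_S$ in Eq.~\eqref{eq:Hphi} acts as the identity on qubits outside $R$. Hence $H_\phi = H'_\phi\otimes I_{[n]\setminus R}$, where $H'_\phi$ acts only on the $r$ qubits in $R$. (Strictly speaking, "$H_\phi$ acts on $r$ qubits" means precisely this tensor-product form; I would state it that way.)

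I do not expect a real obstacle here. The only care point is the convention for $\supp(\ba)$ (the set of indices with $a_i\neq 0$, of cardinality $r$) and making sure the chosen $j$ lies in $S\setminus R$, which is nonempty exactly when $S\not\subseteq R$.
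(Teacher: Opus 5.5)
Your proof is correct and is essentially the paper's argument. The paper notes that $\phi$ does not depend on $x_j$ for $j\notin R$ and averages Eq.~\eqref{eq:coeff} over that coordinate, which is exactly your pairing $\bx\leftrightarrow\bx^{(j)}$; your convolution check via $\bm g_j\equiv 0$ and the tensor-product reading of ``acts on $r$ qubits'' are valid additions.
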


\begin{proof}
$\phi$ does not depend on $x_i$ for $i\notin R$; averaging Eq.~\eqref{eq:coeff} over that coordinate gives $\hat\phi(S)=0$ whenever $i \in S$.
\end{proof}

\begin{proposition}[Cardinality constraints]
\label{prop:cardinality}
If all nonzero $a_i$ are equal, $\phi$ is a symmetric function of the corresponding bits, and $\hat\phi(S)$ depends on $S$ only through $|S|$.
There are therefore at most $n+1$ distinct coefficient values, and the phase separator uses only $n+1$ distinct rotation angles.
\end{proposition}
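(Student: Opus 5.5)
The plan is to reduce first to the support of $\ba$ and then use the permutation symmetry of $\phi$ on that support. Write $R = \supp(\ba)$ with $|R| = r$, and suppose $a_i = c$ for all $i \in R$. Proposition~\ref{prop:sparse} already gives $\hat\phi(S) = 0$ whenever $S \not\subseteq R$. So it remains to study $S \subseteq R$. On $R$,
\begin{equation*}
\phi(\bx) = \xi\Big(c\sum_{i\in R} x_i - b\Big),
\end{equation*}
which depends on the bits in $R$ only through their Hamming weight. Hence $\phi$ is invariant under every permutation $\pi$ of the coordinates that fixes $[n]\setminus R$ pointwise and permutes $R$.

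The key step is to show $\hat\phi(\pi(S)) = \hat\phi(S)$ for such $\pi$. Let $\pi$ act on bit-strings by $(\pi\bx)_{\pi(i)} = x_i$. Then
\begin{equation*}
\chi_{\pi(S)}(\pi\bx) = (-1)^{\sum_{i\in S}x_i} = \chi_S(\bx).
\end{equation*}
In Eq.~\eqref{eq:coeff} for $\hat\phi(\pi(S))$, substitute $\bx \mapsto \pi\bx$; this is a bijection of $\{0,1\}^n$. Using $\phi(\pi\bx) = \phi(\bx)$, the sum becomes exactly the sum defining $\hat\phi(S)$. Since permutations of $R$ act transitively on the subsets of $R$ of a given size, $\hat\phi(S)$ for $S\subseteq R$ depends only on $|S|$.

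An alternative, more computational route uses Theorem~\ref{thm:sub-exp-coeff}. The convolution in Eq.~\eqref{eq:convolution} is commutative, and $\bm g_i$ depends on $i$ only through $a_i$ and through whether $i\in S$. So $\bm\omega_S$ is determined by the multiset $\{(a_i, [i\in S])\}$. When all $a_i$ are equal on $R$, this multiset is fixed by $|S|$ together with the number of zero-weight indices in $S$, and the latter is zero for $S\subseteq R$. I would mention this as a cross-check.

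Counting finishes the argument. The possible values of $|S|$ are $0,1,\dots,r$, so there are at most $r+1 \le n+1$ distinct coefficient values; the extra value $0$ for $S\not\subseteq R$ is already covered because we only count nonzero coefficients. If $r=n$ the bound is exactly $n+1$. The rotation angles in Eq.~\eqref{eq:zsblock} are $2\gamma\hat\phi(S)$, so at most $n+1$ distinct angles appear; the $S=\varnothing$ term is only a global phase and only lowers this count.

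I expect no real obstacle here. The only care needed is in the phrase ``corresponding bits'': when $r<n$ the symmetry holds only within $R$, and the coefficients outside $R$ must be handled by Proposition~\ref{prop:sparse} rather than by the symmetry argument.
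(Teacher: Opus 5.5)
Your proposal is correct and follows the paper's argument: $\phi$ is invariant under permutations of $\supp(\ba)$, and the change of variables $\bx\mapsto\pi\bx$ in Eq.~\eqref{eq:coeff} gives $\hat\phi(\pi(S))=\hat\phi(S)$. Your explicit appeal to Proposition~\ref{prop:sparse} for $S\not\subseteq R$ is a useful refinement that the paper leaves implicit; if the value $0$ is also counted, the bound still holds because $r+2\le n+1$ when $r<n$.
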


\begin{proof}
Permuting the bits in $\supp(\ba)$ leaves $\phi$ invariant and maps $\chi_S\mapsto\chi_{\pi(S)}$. Eq.~\eqref{eq:coeff} then gives $\hat\phi(S) = \hat\phi(\pi(S))$
for every permutation $\pi$ of $\supp(\ba)$.
\end{proof}

\begin{proposition}[Balanced thresholds annihilate even degrees]
\label{prop:parity}
Suppose $\onenorm{\ba}$ is odd and $b = \onenorm{\ba}/2$, and suppose $\xi$ satisfies $\xi(y)+\xi(-y) = c$ for all $y\ne 0$, as the Heaviside step function does with $c=1$.
Then $\hat\phi(S) = 0$ for every $S$ with $|S|$ even and $|S|\ge 2$.
\end{proposition}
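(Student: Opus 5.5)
The natural tool is the complement involution $\bx\mapsto\bar{\bx}$, where $\bar x_i = 1-x_i$, restricted to the support of $\ba$. Coordinates outside $\supp(\ba)$ are already handled: by Proposition~\ref{prop:sparse}, $\hat\phi(S)=0$ whenever $S\not\subseteq\supp(\ba)$. So we may flip all $n$ bits without loss, or equivalently flip only the support and note that $\phi$ ignores the rest. Flipping every bit sends the constraint value $s=\ba^\top\bx$ to $\onenorm{\ba}-s$, so $h(\bar\bx) = \onenorm{\ba}-s-b = -(s-b) = -h(\bx)$ because $b=\onenorm{\ba}/2$. Since $\onenorm{\ba}$ is odd, $h(\bx)=s-\onenorm{\ba}/2$ is a half-integer and never $0$. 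The hypothesis $\xi(y)+\xi(-y)=c$ for $y\neq 0$ therefore applies at every bit-string, giving the pointwise identity
\begin{equation}
\phi(\bx)+\phi(\bar\bx) = c \qquad \text{for all } \bx\in\{0,1\}^n .
\end{equation}

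Next, I track how characters transform under the involution. Since $\chi_S(\bar\bx) = \prod_{i\in S}(-1)^{1-x_i} = (-1)^{|S|}\chi_S(\bx)$, reindexing the sum in Eq.~\eqref{eq:coeff} by $\bx\mapsto\bar\bx$ (a bijection of $\{0,1\}^n$) gives
\begin{equation}
\hat\phi(S) = \frac{1}{2^n}\sum_{\bx}\phi(\bar\bx)\chi_S(\bar\bx) = (-1)^{|S|}\frac{1}{2^n}\sum_{\bx}\big(c-\phi(\bx)\big)\chi_S(\bx).
\end{equation}
By Lemma~\ref{lem:inner-chi}, $\frac{1}{2^n}\sum_\bx \chi_S(\bx) = \inner{\chi_S}{\chi_\varnothing}=0$ for $S\neq\varnothing$, so the constant $c$ drops out. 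This leaves $\hat\phi(S) = -(-1)^{|S|}\hat\phi(S)$. For $|S|$ even and nonzero this reads $\hat\phi(S)=-\hat\phi(S)$, so $\hat\phi(S)=0$. For $|S|$ odd the identity is vacuous, and for $S=\varnothing$ it instead yields $\hat\phi(\varnothing)=c/2$. That explains why the statement excludes $|S|=0$.

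There is no real obstacle here. The one point needing care is the role of the oddness of $\onenorm{\ba}$: it guarantees $h(\bx)\neq 0$ everywhere, so the hypothesis on $\xi$, which is only assumed for $y\neq0$, really holds at every $\bx$. This matters for the Heaviside function, since $\Theta(0)+\Theta(0)=0\neq 1$. Note also that $b=\onenorm{\ba}/2$ is then a half-integer rather than an integer as assumed in Sec.~\ref{sec:method}. This is harmless, because Eq.~\eqref{eq:coeff} never used integrality of $b$.
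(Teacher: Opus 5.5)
Your proof is correct and follows the paper's argument: the complement involution gives $\phi(\bx)+\phi(\bbx)=c$, and reindexing with $\chi_S(\bbx)=(-1)^{|S|}\chi_S(\bx)$ forces the even-degree coefficients to vanish. The only differences are cosmetic: the paper first centers $\varphi=\phi-c/2$, whereas you drop the constant via $\inner{\chi_S}{\chi_\varnothing}=0$, and your appeal to Proposition~\ref{prop:sparse} is unnecessary, since $\ba^{\top}\bbx=\onenorm{\ba}-\ba^{\top}\bx$ holds for the full complement even when some $a_i=0$.
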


\begin{proof}
Let $\bbx$ be the bitwise complement of $\bx$. Since $\ba^{\top}\bbx = \onenorm{\ba} - \ba^{\top}\bx$, the choice $b = \onenorm{\ba}/2$ gives $h(\bar\bx) = -h(\bx)$.
Because $\onenorm{\ba}$ is odd, $b$ is a half-integer and $h(\bx)\ne 0$ for every $\bx$, so $\phi(\bx)+\phi(\bbx) = c$ identically.
Set $\varphi = \phi - c/2$, which obeys $\varphi(\bbx) = -\varphi(\bx)$.
Using $\chi_S(\bbx) = (-1)^{|S|}\chi_S(\bx)$ and reindexing the sum in Eq.~\eqref{eq:coeff} by $\bx\mapsto\bbx$ gives $\hat\varphi(S) = (-1)^{|S|+1}\hat\varphi(S)$,
so $\hat\varphi(S) = 0$ for even $|S|$. Finally $\hat\phi(S) = \hat\varphi(S)$ for all $S\ne\varnothing$.
\end{proof}

Proposition~\ref{prop:parity} halves the term count in the balanced regime: a degree-$3$ truncation then retains only weight-$1$ and weight-$3$ terms,
and a degree-$4$ truncation buys nothing over a degree-$3$ one. It also makes the general point concrete. Moving $b$ away from $\onenorm{\ba}/2$ restores the even-degree terms, 
so the same penalty profile can be cheap or expensive depending on where the threshold sits relative to the constraint coefficients.
This is one reason why the interplay between $\ba$, $b$ and $\xi$ deserves study in its own right, and the three propositions above are best read as
the cases in which that interplay is favorable enough to identify in advance.

\section{Conclusion}
\label{sec:conclusion}

We have shown that slack-free custom penalties, which in our previous work required a problem-agnostic VQE ansatz, can be implemented exactly in the phase separator of QAOA.
The key observation is that the Walsh--Hadamard expansion of a penalty $\xi(\ba^{\top}\bx-b)$ is its Pauli-$Z$ expansion (Lemma~\ref{lem:dictionary}).
Consequently, for any penalty profile $\xi$, the phase separator of WCP-QAOA factorizes into commuting $\mathrm{CNOT}$--$R_z$ blocks with no ancilla qubit and no Trotter error.
Truncating the Walsh series at degree $k$ keeps $\BigO(n^k)$ Pauli terms on the original $n$ qubits.
The truncation is the best degree-$k$ approximation of the penalty in $L^2$ (Theorem~\ref{thm:walsh-opt}), so it is never worse in $L^2$ than any polynomial surrogate of the same degree, including the Taylor surrogate that underlies unbalanced penalization~\cite{ub-penalty}.
Proposition~\ref{prop:separation} gives a sufficient condition on the discarded Walsh weight under which every feasible bit-string remains below every infeasible one.

The Walsh coefficients do not require an enumeration of the $2^n$ bit-strings.
Theorem~\ref{thm:sub-exp-coeff} reduces each coefficient to $n$ sparse convolutions, which gives any coefficient exactly in $\BigO(n\onenorm{\ba})$ time, and hence the degree-$k$ Hamiltonian and its retained spectral power in time polynomial in $n$ and $\onenorm{\ba}$ for fixed $k$.
For the exponential penalty no transform is needed at all: its spectrum factorizes over the qubits, so every coefficient follows in closed form from the constraint coefficients (Proposition~\ref{prop:closed-exp}).
The structure of the constraint simplifies the exact spectrum before any truncation is applied: a sparse support limits the number of nonzero coefficients, equal constraint coefficients limit the number of distinct ones, and a balanced threshold removes every even-degree term beyond the constant for step-like penalties (Proposition~\ref{prop:parity}).

The numerical study of a 6-qubit knapsack instance shows that the cost of truncation strongly depends on the penalty profile.
The exponential and ReLU penalties retain almost all of their spectral power by degree three, whereas the step and the hyperbolic-tangent penalties spread it up to the full degree.
Consistently, the probability of sampling the optimum with WCP-QAOA at depths up to 5 increases appreciably only when the full spectrum is retained.
Two lessons follow.
First, $L^2$ optimality under the uniform measure, and hence a large retained power $\Pi_k$, does not guarantee that the ground state is preserved, because the $L^2$ criterion weighs all $2^n$ bit-strings equally, whereas the outcome of the optimization is decided by the ordering of the lowest-energy states.
Second, the large penalty factor that separates feasible from infeasible states also amplifies the distortion that the truncation introduces within the feasible set, so feasibility and optimality pull the penalty factor in opposite directions.

There are several directions for future work.
A computable criterion that certifies the preservation of the low-energy tail, rather than the full feasible--infeasible ordering of Proposition~\ref{prop:separation}, would make the choice of $k$ and $\lambda$ principled, particularly in combination with CVaR training~\cite{cvar,cvar-cp}.
Penalty profiles could be designed to concentrate their spectrum at low degree for a prescribed penalty gap, and truncations could be fitted under measures that emphasize the low-energy region rather than weighting all bit-strings equally.
On the circuit side, CNOT-efficient synthesis of sparse truncated spectra and the combination with feasibility-preserving mixers~\cite{xy-mixer,grover-mixer} remain to be explored.
The parameter landscape of WCP-QAOA also deserves study: since the Walsh coefficients are generally not integers, the $2\pi$-periodicity in $\gamma$ enjoyed by integer-valued cost Hamiltonians is not guaranteed, and it is open which of the regularities behind QAOA initialization heuristics carry over.
Finally, systematic benchmarks against slack-based QUBO and unbalanced penalization on larger and multiply constrained instances, such as the MDKP, are needed to establish when the Walsh route offers a practical advantage.

\begin{acknowledgments}
This project is supported by the National Research Foundation, Singapore through the National Quantum Office under its Quantum Engineering Programme (QEP) 3.0 (Project ID: W24Q3D0002). Any opinions, findings, and conclusions, or recommendations expressed in this material are those of the author(s) and do not reflect the views of National Research Foundation, Singapore. 
\end{acknowledgments}

\appendix

\section{Proof of Theorem~\ref{thm:sub-exp-coeff}}
\label{app:convolution}

Let $\zeta$ be a formal variable, used purely as a bookkeeping label: the coefficient of $\zeta^{s}$ records data attached to the constraint value $s$, and $\zeta$ is never assigned a numerical value. For each bit $i$ define
\begin{align}
    g_i(\zeta) & = \sum_{x_i\in\{0,1\}} \frac{1}{2}(-1)^{x_i[i\in S]}\zeta^{a_ix_i}
    \label{eq:app-gi-def}\\
    &= \frac{1}{2} + \frac{1}{2}(-1)^{[i\in S]}\zeta^{a_i},
    \label{eq:app-gi}
\end{align}
The two terms of Eq.~\eqref{eq:app-gi-def} are the two possible values of bit $i$: choosing $x_i = 1$ advances the exponent by $a_i$, which is that bit's contribution to $s = \ba^{\top}\bx$, and multiplies the sign by $-1$ if $i\in S$, which is that bit's contribution to the parity $\chi_S(\bx)$.

Define
\begin{equation}
    G_S(\zeta) = \prod_{i=1}^{n}g_i(\zeta)
    = \prod_{i=1}^{n}\ \sum_{x_i\in\{0,1\}}\frac{1}{2}(-1)^{x_i[i\in S]}\zeta^{a_ix_i}.
    \label{eq:app-GS}
\end{equation}
Expanding a product of sums into a sum over all combinations of choices is the distributive law, and a choice of one term from each factor is exactly a bit-string $\bx\in\{0,1\}^n$. Hence
\begin{align}
    G_S(\zeta)
    &= \sum_{\bx\in\{0,1\}^n}\ \prod_{i=1}^{n}\frac{1}{2}(-1)^{x_i[i\in S]}\zeta^{a_ix_i}
    \label{eq:app-expand}\\
    &= \frac{1}{2^n}\sum_{\bx\in\{0,1\}^n}(-1)^{\sum_{i\in S}x_i}\ \zeta^{\ba^{\top}\bx}
    \label{eq:app-collect}\\
    &= \frac{1}{2^n}\sum_{\bx\in\{0,1\}^n}\chi_S(\bx)\zeta^{\ba^{\top}\bx}.
    \label{eq:app-chi}
\end{align}
Every bit-string contributes a single monomial $\zeta^{\ba^{\top}\bx}$, so grouping the sum by the value $s = \ba^{\top}\bx$ gives
\begin{equation}
    G_S(\zeta) = \sum_{s=0}^{\onenorm{\ba}}
    \underbrace{\Bigg(\frac{1}{2^n}\sum_{\bx:\,\ba^{\top}\bx = s}\chi_S(\bx)\Bigg)}_{\textstyle =\ \omega_S(s)}\zeta^{s}.
    \label{eq:app-group}
\end{equation}
Thus, $\bm\omega_S$ is the coefficient vector of $G_S$. Since the coefficient vector of a product of polynomials is the convolution of their coefficient vectors,
Eq.~\eqref{eq:app-GS} gives
\begin{equation}
    \bm\omega_S = \bm{g}_1 * \bm{g}_2 * \cdots * \bm{g}_n,
    \label{eq:app-conv}
\end{equation}
which is Eq.~\eqref{eq:convolution}. Explicitly, $\bm{g}_i$ is the vector with
\begin{equation}
    \bm{g}_i[s] =
    \begin{cases}
        \phantom{-}\tfrac{1}{2} & s = 0,\\[2pt]
        -\tfrac{1}{2} & s = a_i \ \text{and}\ i\in S,\\[2pt]
        \phantom{-}\tfrac{1}{2} & s = a_i \ \text{and}\ i\notin S,\\[2pt]
        \phantom{-}0 & \text{otherwise},
    \end{cases}
    \label{eq:app-gvec}
\end{equation}
with the two cases $s=0$ and $s=a_i$ added together when $a_i = 0$.

For the complexity, each $\bm{g}_i$ has at most two nonzero entries, so convolving it with a running vector truncated to length $\onenorm{\ba}+1$ costs $O(\onenorm{\ba})$.
Performing $n$ such convolutions costs $O(n\onenorm{\ba})$, and the final evaluation $\hat\phi(S) = \sum_s \xi(s-b)\omega_S(s)$ costs a further $O(\onenorm{\ba})$.
Only two vectors of length $\onenorm{\ba}+1$ need be stored. \hfill$\square$

\section{Proof for Proposition~\ref{prop:closed-exp}}
\label{app:exp-product}

For the function closed-form Eq.~\eqref{eq:exp-closed}, the exponential penalty can be decomposed into
\begin{equation}
    e^{\eta(\ba^\top\bx-b)}=e^{-\eta b}\prod_i e^{\eta a_ix_i}.
\end{equation}
Since each factor is $e^{\eta a_i(1-z_i)/2}$ in the spin domain $z_i\in\{1,-1\}$, matching the values when $z_i=1\rightarrow \phi_\eta = 1$ and $z_i=-1\rightarrow \phi_\eta = e^{\eta a_i}$
gives a function linear in $z_i$. 
\begin{equation}
    e^{\eta a_i(1-z_i)/2} = \alpha_i + \delta_i z_i,
\end{equation}

For the closed-form of Walsh coefficients Eq.~\eqref{eq:exp-coeffs}, since $\phi_\eta(\bx ) = e^{\eta(\ba^\top \bx - b)}$,

\begin{align*}
    \hat{\phi}_\eta(S) &= \frac1{2^n}\sum_{x \in \{0,1\}^n} e^{-\eta b} \prod_{i=1}^n e^{\eta a_ix_i}\chi_S(x)\\
    &=\frac1{2^n}e^{-\eta b}\sum_{x \in \{0,1\}^n}  \left(\prod_{i=1}^n e^{\eta a_ix_i}\right)\left(\prod_{i \in S}(-1)^{x_i}\right)\\
    &=\frac1{2^n}e^{-\eta b}\sum_{x \in \{0,1\}^n}  \left(\prod_{i\in S} e^{\eta a_ix_i}(-1)^{x_i}\right)\left(\prod_{i \notin S}e^{\eta a_i x_i}\right)
\end{align*}

Now, we set $z_i = (-1)^{x_i}$ and express $e^{\eta a_ix_i}=\alpha_i + \delta_i z_i.$ This would simplify the expression as follows:

\begin{align*}
&\hat{\phi}_\eta(S)\\
    &= \frac1{2^n}e^{-\eta b}\sum_{z \in \{-1,1\}^n}  \left(\prod_{i\in S} (\alpha_i + \delta_i z_i)z_i\right)\left(\prod_{i \notin S}(\alpha_i + \delta_iz_i)\right)\\
    &=\frac1{2^n}e^{-\eta b}\sum_{z \in \{-1,1\}^n}  \left(\prod_{i\in S} (\alpha_iz_i + \delta_i )\right)\left(\prod_{i \notin S}(\alpha_i + \delta_iz_i)\right)
    \\&=\frac1{2^n}e^{-\eta b}\prod_{i \in S} \left(\sum_{z_i\in \{-1,1\}} (\alpha_iz_i + \delta_i )\right)\prod_{i \notin S}\left(\sum_{z_i \in \{-1,1\}}(\alpha_i + \delta_iz_i)\right)
        \\&=\frac1{2^n}e^{-\eta b}\prod_{i \in S}  (2\delta_i )\prod_{i \notin S}\left(2\alpha_i \right)
    \\&=e^{-\eta b}\prod_{i\in S}\delta_i \prod_{i\notin S}\alpha_i,
\end{align*}
\hfill$\square$


\bibliography{qaoaref,ccop}

\end{document}